\documentclass[a4paper,11pt]{article}
\pdfoutput=1 

\usepackage{jheppub} 

\usepackage[T1]{fontenc} 
\usepackage{subcaption}
\usepackage{verbatim}
\usepackage{mathrsfs}
\usepackage{physics}
\usepackage{tikz}
\usetikzlibrary{calc}
\usetikzlibrary{shapes.geometric,positioning}
\usepackage{amsthm}
\usepackage{mathtools}
\usepackage{enumerate}

\emailAdd{jcaminiti@perimeterinstitute.ca}
\emailAdd{fcapeccia@ucdavis.edu}
\emailAdd{jsorce@princeton.edu}

\definecolor{JBlue}{HTML}{1A77A6}
\definecolor{ForestGreen}{RGB}{34,139,34}

\newcommand{\ql}{``} 
\newcommand{\qr}{''$\;$}

\newcommand{\ie}{i.e.~}

\newcommand{\del}{\partial}

\renewcommand{\tilde}{\widetilde}
\renewcommand{\hat}{\widehat}

\renewcommand{\Re}{\text{Re}}

\newcommand{\A}{\mathcal{A}}

\renewcommand{\H}{\mathcal{H}}
\newcommand{\K}{\mathcal{K}}

\newtheorem{theorem}{Theorem}

\newtheorem{lemma}[theorem]{Lemma}

\theoremstyle{definition}

\newtheorem*{remark*}{Remark}

\title{Excitability of Gaussian states with VEVs}
\author[a,b]{Jacqueline Caminiti,}
\author[c]{Federico Capeccia,}
\author[d]{and Jonathan Sorce}
\affiliation[a]{Perimeter Institute for Theoretical Physics}
\affiliation[b]{Department of Physics \& Astronomy, University of Waterloo}
\affiliation[c]{University of California, Davis}
\affiliation[d]{Princeton Gravity Initiative}
\abstract{
In \cite{Caminiti:paper-2}, we gave general criteria for when one zero-mean Gaussian state can be excited out of another in a (generalized) free field theory.
Here we extend this analysis to the case of nonzero mean, i.e., to Gaussian states with vacuum expectation values (VEVs).
We prove that excitability is possible exactly when (i) the connected two-point functions satisfy criteria like those in \cite{Caminiti:paper-2}, and (ii) the difference of the VEVs  is bounded relative to the two-point functions.
As an application, we give an explicit computation showing that in anti-de Sitter spacetime, a VEV shift can be excited from the Klein-Gordon vacuum if and only if its boundary extrapolation can be excited from the vacuum of the dual conformal field theory. 
}

\begin{document}
 \maketitle

\section{Introduction}

Excitability is a central notion in quantum field theory. 
Particles are excitations of the vacuum; 
the Hilbert space associated to a reference state characterizes its possible excitations; 
and nontrivial superselection structure arises when distinct states cannot be excited out of one another using local operators. Understanding when one state of a quantum field theory can be (locally) excited out of another is therefore an important question, and one to which the algebraic approach is well-suited. 

In algebraic quantum field theory, one begins with a $\ast$-algebra $\mathcal{A}_0$ of quantum fields, living on a spacetime manifold $\mathcal{M}$. 
One then considers abstract states on $\mathcal{A}_0$ that play the role of correlation functions.
For each choice of abstract state $\omega$, the associated GNS Hilbert space $\mathcal{H}_{\omega}$  characterizes the space of local excitations around $\omega$ \cite{zbMATH03097323, Segal:irreducible}. 
Hence we say that $\omega_2$ can be excited out of $\omega_1$ if $\omega_2$ admits a representative inside $\mathcal{H}_{\omega_1}$ as a density  matrix $\rho_{\omega_2}$ satisfying
\begin{equation}
\mathrm{tr}_{\H_{\omega_1}}(\rho_{\omega_2}a)=\omega_2(a)\,, \quad a\in\A_0\,.
\end{equation}

In a prior paper \cite{Caminiti:paper-2}, we established necessary and sufficient criteria for excitability in the special case where $\A_0$ describes a (generalized) free field theory, and where $\omega_1$ and $\omega_2$ are Gaussian states with vanishing one-point functions. 
The key excitability criteria are expressible in terms of the two-point functions of $\omega_1$ and $\omega_2$, and roughly characterize the normalizability of the $\omega_2$ representative within $\mathcal{H}_{\omega_1}$.
That paper built on the work of \cite{Powers:quasi, Araki:quasi, VanDaele:quasi, Araki-Yamagami}, with improved techniques due to the canonical purification technology of \cite{Woronowicz:purification, Sorce:paper1}.

In the present paper, we generalize the results of \cite{Caminiti:paper-2} by establishing excitability criteria for Gaussian states with nonvanishing one-point functions. By abuse of terminology, we will refer to nonvanishing one-point functions as ``VEVs'' (vacuum expectation values), thinking of the state $|\omega\rangle$ as a generalized Fock vacuum within its GNS sector $\H_{\omega}.$ States with nonvanishing VEVs are physically relevant, with important examples including generic coherent states, 
as well as various states that arise in holographic CFTs.
In the pure-state case, relevant VEV excitability criteria were already stated in \cite[page 82]{Ashtekar:1987tt} and proved in \cite[proposition 2.5]{Honegger1990TheGF}; our analysis generalizes this work to arbitrary Gaussian states.

As our main result, we establish excitability criteria for general mixed Gaussian states $\omega_1$ and $\omega_2$  with VEVs.
We demonstrate that excitability requires the VEV difference of $\omega_1$ and $\omega_2$ to be bounded in an appropriate sense.
Moreover, if this VEV boundedness condition holds, we show that excitability between $\omega_1$ and $\omega_2$ is equivalent to excitability of their zero-mean counterparts, at which point \cite[theorem 1.31]{Caminiti:paper-2} can be applied.

To prove our main result, we use two key lemmas.
Lemma \ref{lem:1} establishes that simultaneously shifting a pair of Gaussian states by a single VEV does not affect whether excitability holds.
This is a simple result that follows from the construction of an intertwining unitary map.
Lemma \ref{lem:2} is more involved:
it identifies VEV boundedness as the relevant criterion for exciting a VEV out of a VEVless Gaussian state.
We address lemma \ref{lem:2} in the spirit of \cite{Caminiti:paper-2} by beginning with the pure-state case, where the desired result follows from simple calculations involving coherent states.
We then tackle the mixed-state case using canonical purifications \cite{Sorce:paper1}.
A subtlety (also encountered in \cite{Caminiti:paper-2})
is that canonical purifications of general states are not pure in the strictest sense. 
Instead, they are \textit{centrally pure} states,
which can be thought of as pure states with additional data having to do with a nontrivial center for the algebra of local observables. 
In our analysis below, we take care to address the centrally pure case, which includes classical systems (i.e. abelian algebras) as a subset.

After proving our theorem, we apply it to the example of free fields in anti-de Sitter space (AdS).
We study when a nontrivial VEV can be excited out of the AdS vacuum, and we demonstrate that our excitability relation is compatible with the bulk-boundary duality of \cite{Witten:1998qj, Gubser:1998bc, Balasubramanian:Lorentzian, Klebanov:1999tb}.

The outline of the paper is as follows.
In section \ref{ssec:summary-of-results}, we give a compact summary of our results in theorem-proof format.
In section \ref{sec:background-info}, we give a lightning review of relevant concepts from free field theory,
and we explain how to decompose our general excitability theorem into two lemmas.
In section \ref{sec:simult-VEV}, we prove lemma \ref{lem:1} regarding excitability under simultaneous VEV shifts.
In section \ref{sec:one-sided-VEV}, we prove lemma \ref{lem:2} regarding excitability under one-sided VEV shifts.
In section \ref{sec:large-N}, we discuss an application to free fields in anti-de Sitter space.

\vspace{10pt}
\noindent
\textbf{Comment on notation.} In the following, $\A_0$  denotes a (generalized) free $\ast$-algebra, 
$\omega$  denotes a Gaussian state,
and $\H_\omega$  denotes the GNS Hilbert space of $\omega$.
The symbol $\A_\omega$ denotes the von Neumann algebra associated to $\A_0$ in the GNS space $\H_\omega$.

Throughout the paper, we  work with smeared field operators, denoted $\phi[f]$, representing the formal expression
\begin{equation}
\phi[f] \equiv \int \dd{^{D} x} \sqrt{|g|} \phi(x) f(x),
\end{equation}
with $\phi(x)$ the (generalized) free field. Correlation functions are written as multidistributions $\omega(\phi(x_1)\dots\phi(x_n))$, with the understanding that the actual correlators of field operators are obtained by smearing against test functions.
Similarly, a VEV shift $v$ can be specified by a function (or distribution) $v_s(x)$, so that one has
\begin{equation}
    v[f]\equiv \int \dd{^{D} x} \sqrt{|g|} v_s(x) f(x).
    \label{eq:v-functional}
\end{equation}
The subscript \ql$s$'' indicates that, in the case of a Klein-Gordon theory, $v_s$ is required to be a solution of the equations of motion.

\subsection{Summary of results}
\label{ssec:summary-of-results}

\begin{remark*}
By a ``smearing function'' $f$, we always mean a smooth, compactly supported, complex-valued function on spacetime.
A Gaussian state $\omega$ induces a natural inner product on the space of smearing functions, denoted by $\mu$ and defined explicitly in section \ref{ssec:free-fields}.
A VEV will always be a linear functional on the space of smearing functions, written $v[f]$, and when we say that a VEV is ``$\mu$-bounded'' we mean there is a constant $C$ satisfying, for all $f$,
\begin{equation}
    |v[f]|^2 \leq C \langle f | f\rangle_{\mu}.
\end{equation}
Equivalently, $v[f]$ is $\mu$-bounded if the convergence $\langle f_n | f_n\rangle_{\mu} \to 0$ implies $v[f_n] \to 0.$
\end{remark*}

\begin{remark*}
Borrowing notation from \cite{Caminiti:paper-2}, we write $\omega_2 \prec \omega_1$ when $\omega_2$ can be excited from $\omega_1.$ A mathematically precise definition of excitability can be found in section 2 of that paper.
\end{remark*}

\setcounter{theorem}{0}
\begin{theorem}
\label{thm:1}
    Given two Gaussian states $\omega_1$ and $\omega_2$ on a (generalized) free field $\ast$-algebra $\A_0$,
    $\omega_2$ can be excited out of  $\omega_1$ if and only if
    \begin{enumerate}
        \item[(i)] the same is true for their zero-mean counterparts, namely for the zero-mean Gaussian states whose two-point functions are given by the connected two-point functions of $\omega_1$ and $\omega_2$; and
        \item[(ii)] the difference of one-point functions, $\omega_2(\phi[f])-\omega_1(\phi[f])$, is $\mu_1$-bounded.
    \end{enumerate}
\end{theorem}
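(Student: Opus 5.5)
The plan is to reduce Theorem~\ref{thm:1} to the two lemmas described in the introduction, together with the transitivity of $\prec$. Write $\omega_i^0$ for the zero-mean Gaussian state whose two-point function is the connected two-point function of $\omega_i$, and $v_i[f] = \omega_i(\phi[f])$. Then $\omega_i = \omega_i^0 + v_i$, meaning $\omega_i$ is the shift of $\omega_i^0$ by the VEV $v_i$. The two lemmas I will use are:
\begin{enumerate}
\item[] \emph{Lemma~\ref{lem:1}:} for any VEV $w$, $\omega_2 \prec \omega_1$ if and only if $\omega_2 + w \prec \omega_1 + w$.
\item[] \emph{Lemma~\ref{lem:2}:} $\omega^0 + v \prec \omega^0$ if and only if $v$ is $\mu$-bounded, where $\mu$ is the inner product induced by $\omega^0$.
\end{enumerate}
The inner product $\mu_1$ depends only on the connected two-point function, so it is the same for $\omega_1$ and $\omega_1^0$.

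Applying Lemma~\ref{lem:1} with $w=-v_1$ gives $\omega_2 \prec \omega_1$ if and only if $\omega_2^0 + (v_2-v_1) \prec \omega_1^0$. This removes the VEV from the reference state. It remains to show that, with $\delta = v_2-v_1$,
\[
\omega_2^0 + \delta \prec \omega_1^0 \quad\Longleftrightarrow\quad \omega_2^0\prec\omega_1^0 \ \text{ and } \ \delta \text{ is } \mu_1\text{-bounded}.
\]

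\textbf{Sufficiency.} Suppose $\omega_2^0 \prec \omega_1^0$ and $\delta$ is $\mu_1$-bounded. By Lemma~\ref{lem:2}, $\omega_1^0+\delta \prec \omega_1^0$. By Lemma~\ref{lem:1}, $\omega_2^0 \prec \omega_1^0$ implies $\omega_2^0 + \delta \prec \omega_1^0+\delta$. Transitivity of $\prec$ then gives $\omega_2^0+\delta \prec \omega_1^0$. Transitivity holds because a normal state on $\A_{\omega_1}$ has a GNS representation quasi-contained in $\H_{\omega_1}$, so states normal in it are normal in $\H_{\omega_1}$.

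\textbf{Necessity.} This is the main obstacle, and I would split it into two parts.

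For the $\mu_1$-boundedness of $\delta$, take $f_n$ with $\langle f_n|f_n\rangle_{\mu_1}\to 0$. Then $\phi[f_n]$ tends to zero in the strong resolvent sense on $\H_{\omega_1}$, since the connected variance $\omega_1^0(\phi[f_n]^2)$ tends to zero. Consequently $e^{i t\phi[f_n]}\to 1$ strongly, and $\rho(e^{it\phi[f_n]})\to 1$ for every normal state $\rho$. For the Gaussian state $\omega_2^0+\delta$ this expectation equals $e^{it\delta[f_n]}e^{-t^2\omega_2^0(\phi[f_n]^2)/2}$. Its modulus tending to $1$ forces $\omega_2^0(\phi[f_n]^2)\to0$, and then its phase forces $\delta[f_n]\to 0$ modulo $2\pi/t$ for every $t$. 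Taking $t$ small, together with real and imaginary parts of $f_n$, gives $\delta[f_n]\to 0$. This argument also shows that $\mu_2$ is dominated by $\mu_1$ in this weak sense.

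For the zero-mean condition, I would use that $\omega_2^0$ is the average of $\omega_2^0+\delta$ and $\omega_2^0-\delta$. The state $\omega_2^0-\delta$ is excitable from $\omega_1^0$ as well: apply Lemma~\ref{lem:1} and Lemma~\ref{lem:2} with the now-established boundedness of $2\delta$, starting from $\omega_2^0+\delta\prec\omega_1^0$. Normal states form a convex set, so the average $\omega_2^0\prec\omega_1^0$. Whether a genuine Gaussian average is needed (mixing over scaled $\delta$) or the plain midpoint suffices should be checked. The midpoint does suffice, because $\omega_2^0$'s characteristic function is $e^{-\|f\|^2/2}$, whereas the midpoint of the two shifted states gives $\cos(\delta[f])e^{-\|f\|^2/2}$. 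The midpoint is therefore not Gaussian, so one needs an average of $\omega_2^0 + s\delta$ over $s$. Here I would run into trouble, since $\cos$ never integrates to $1$. Instead I would take the weak-$*$ limit of normal states $\omega_2^0+\delta/n$, all dominated uniformly, and argue normality via the Lemma~\ref{lem:2} intertwiner becoming close to the identity in norm as $n\to\infty$.
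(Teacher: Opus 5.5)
The gap is in the necessity of (i). Your reduction via Lemma~\ref{lem:1} to the pair $(\omega_2^0+\delta,\ \omega_1^0)$ matches the paper. So do your sufficiency argument and your proof that $\delta$ is $\mu_1$-bounded (the latter is the continuity argument of section~\ref{ssec:nec-crit}). Two small repairs are needed in that boundedness step:
\begin{itemize}
\item The strong convergence $e^{it\phi[f_n]}\to 1$ should be stated on $\mathcal{H}_{\omega_1^0}$. On $\mathcal{H}_{\omega_1}$ it holds only for the mean-subtracted field.
\item ``Taking $t$ small'' does not finish the job. For fixed $t$ you only learn that $\delta[f_n]$ approaches the lattice $(2\pi/t)\mathbb{Z}$. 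Either use all $t$ at once, or argue as the paper does: replace $f_n$ by $x g_n/\delta[g_n]$, so that $\delta[f_n]=x\notin 2\pi\mathbb{Z}$ stays fixed while $f_n\to 0$.
\end{itemize}

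\textbf{Necessity of (i).} You correctly see that averaging fails. A mixture of the states $\omega_2^0+s\delta$ with distribution $\nu$ has characteristic function $\hat\nu(\delta[f])\,e^{-\|f\|_{\mu_2}^2/2}$, which matches $\omega_2^0$ only when $\nu$ is the point mass at $s=0$. Your fallback does not work either:
\begin{itemize}
\item Normal states are not weak-$*$ closed.
\item The claimed uniform domination by a fixed normal functional is not available.
\item The operators implementing small VEV shifts do not approach the identity in operator norm. Weyl unitaries $e^{i\phi[\psi]/n}$ converge to $1$ only strongly, since $\phi[\psi]$ has spectrum $\mathbb{R}$ in a Gaussian representation. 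In the abelian part the implementing operators $e^{-\phi[v_0]/2n}$ are unbounded.
\item Norm convergence $\omega_2^0+\delta/n\to\omega_2^0$ would itself force $\delta$ to be $\mu_2$-bounded; test on $e^{i\phi[f]}$ with $\|f\|_{\mu_2}\le 1$. You do not know this. Your bound $\|f\|_{\mu_2}\le C\|f\|_{\mu_1}$ goes the wrong way for it.
\end{itemize}

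The missing step is one line. It uses exactly the mechanism you applied to obtain $\omega_2^0-\delta\prec\omega_1^0$, but with shift $-\delta$ instead of $-2\delta$:
\begin{itemize}
\item Lemma~\ref{lem:1} turns $\omega_2^0+\delta\prec\omega_1^0$ into $\omega_2^0\prec\omega_1^0-\delta$.
\item Lemma~\ref{lem:2} gives $\omega_1^0-\delta\prec\omega_1^0$. It applies because you have just shown $\delta$ is $\mu_1$-bounded.
\item Transitivity yields $\omega_2^0\prec\omega_1^0$.
\end{itemize}
This is the paper's argument in section~\ref{ssec:proof-strategy}: shift both states by $v_2$, then excite the residual VEV $v_1-v_2$ out of the zero-mean state $\omega_1^{v_1}$. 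It is simply the mirror image of your sufficiency step, and no averaging or limits are needed.
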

\noindent

\begin{proof}
The necessity of condition (ii) is proved in section \ref{ssec:nec-crit}.
The remainder of the theorem follows from combining lemmas \ref{lem:1} and \ref{lem:2} below, as explained in section \ref{ssec:proof-strategy}.
\end{proof}

\setcounter{theorem}{0}
\begin{lemma}
\label{lem:1}
    Excitability is preserved under any simultaneous VEV shift; that is,
\begin{equation}
    \omega_2 \prec \omega_1 \iff \omega_2^v \prec \omega_1^v\,,
\end{equation}
where the superscript $v$ indicates the VEV-shifting operation introduced in section \ref{ssec:proof-strategy}.
\end{lemma}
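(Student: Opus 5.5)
The plan is to realize the VEV shift as an automorphism of the field algebra and then check that this automorphism acts unitarily on the GNS representations. Define $\alpha_v$ on $\A_0$ by $\alpha_v(\phi[f]) = \phi[f] + v[f]\,\mathbb{1}$, extended multiplicatively. For a Klein--Gordon theory, $v_s$ solves the equations of motion, so $\alpha_v$ respects the field equation and the commutator relations, because the commutator $[\phi[f],\phi[g]]$ is a c-number and is unchanged. Hence $\alpha_v$ is a well-defined $\ast$-automorphism with inverse $\alpha_{-v}$. The shifted state is then $\omega^v = \omega\circ\alpha_v$. This keeps the connected two-point function fixed and adds $v$ to the one-point function, which I take to be the VEV-shifting operation of section \ref{ssec:proof-strategy}.

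Next, given any state $\omega$, construct a unitary $U_v:\H_\omega\to\H_{\omega^v}$ intertwining the representations. By uniqueness of the GNS construction, $(\H_\omega,\pi_\omega\circ\alpha_v,|\omega\rangle)$ is a cyclic representation whose vector state is $\omega\circ\alpha_v=\omega^v$. Therefore there is a unique unitary $U_v$ with
\begin{equation}
U_v\,\pi_\omega(\alpha_v(a))\,|\omega\rangle = \pi_{\omega^v}(a)\,|\omega^v\rangle ,
\end{equation}
and it satisfies $U_v\,\pi_\omega(\alpha_v(a))\,U_v^\dagger=\pi_{\omega^v}(a)$ for all $a\in\A_0$. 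Isometry follows from $\omega(\alpha_v(a)^\ast\alpha_v(a))=\omega^v(a^\ast a)$, and surjectivity from cyclicity of $|\omega^v\rangle$. This step needs only that $\alpha_v$ is an automorphism; Gaussianity is not required.

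Now apply this with $\omega=\omega_1$. Suppose $\omega_2\prec\omega_1$, witnessed by a density matrix $\rho$ on $\H_{\omega_1}$ with $\mathrm{tr}(\rho\,\pi_{\omega_1}(a))=\omega_2(a)$. Set $\rho'=U_v\rho U_v^\dagger$, a density matrix on $\H_{\omega_1^v}$. Then
\begin{equation}
\mathrm{tr}\bigl(\rho'\,\pi_{\omega_1^v}(a)\bigr)=\mathrm{tr}\bigl(\rho\,\pi_{\omega_1}(\alpha_v(a))\bigr)=\omega_2(\alpha_v(a))=\omega_2^v(a),
\end{equation}
so $\omega_2^v\prec\omega_1^v$. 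The converse follows by applying the same argument with $-v$, using $(\omega^v)^{-v}=\omega$.

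If section 2 of \cite{Caminiti:paper-2} states excitability at the level of the von Neumann algebra $\A_{\omega_1}$ rather than $\A_0$, one more remark is needed. Conjugation by $U_v$ maps $\pi_{\omega_1}(\A_0)''$ onto $\pi_{\omega_1^v}(\A_0)''$ and preserves normality, so the correspondence passes to the weak closures. This also handles any local-region version of excitability, since $\alpha_v$ maps each local subalgebra to itself. The main point needing care is confirming that $\alpha_v$ really is an automorphism, that is, that $v$ is compatible with every relation of $\A_0$, such as the equations of motion in the Klein--Gordon case. Once that holds, the intertwiner argument is formal.
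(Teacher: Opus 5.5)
Your proof is correct and is essentially the paper's. The unitary you get from GNS uniqueness for the automorphism $\alpha_v$ is exactly the paper's $U_j$; the paper instead defines it on coherent states and checks inner products by a Baker--Campbell--Hausdorff computation. Likewise, transporting the density matrix with a single such unitary is equivalent to the paper's conjugation $\alpha^v=\mathbf{U}_2\circ\alpha\circ\mathbf{U}_1^{-1}$ of the ultraweakly continuous identity map on Weyl operators. The only adjustment is a sign convention: the paper's $\omega^v$ has one-point function $\omega(\phi[f])-v[f]$, i.e.\ $\omega^v=\omega\circ\alpha_{-v}$ in your notation. This is harmless, since the lemma is stated for every $v$.
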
 

\begin{proof}
See section \ref{sec:simult-VEV}.
\end{proof}

\begin{lemma}
\label{lem:2}
    A VEV $v$ can be excited out of a zero-mean Gaussian state $\omega$ if and only if $v$ is $\mu$-bounded; that is, 
\begin{equation}
    v[f] \text{ is $\mu$-bounded} \quad\Longleftrightarrow\quad \omega^v \prec \omega .
\end{equation}
\end{lemma}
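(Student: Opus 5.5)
Both directions reduce to one structural fact: the VEV-shifted state $\omega^v$, defined by $\omega^v(W(f)) = e^{i v[f]}\omega(W(f))$ on Weyl operators (equivalently, shift $\phi[f]\mapsto \phi[f]+v[f]\mathbf 1$), is implemented in $\H_\omega$ by a unitary precisely when $v$ is $\mu$-bounded. The plan is to establish this first for pure $\omega$ and then transfer it to general $\omega$ through the canonical purification of \cite{Sorce:paper1}. Centrally pure states are handled by the same argument.

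\textbf{Pure case, sufficiency.} Let $\omega$ be pure and $v$ be $\mu$-bounded. Riesz representation on the one-particle space (the completion of smearing functions modulo null vectors in $\langle\cdot|\cdot\rangle_\mu$) gives a vector $h$ with $v[f] = 2\,\mathrm{Im}\langle h|f\rangle_\mu$, up to the symplectic normalization convention. A bounded real-linear functional of this form can always be represented this way, since the symplectic form is nondegenerate on the one-particle space of a pure state. The Weyl unitary $W(h)$ then satisfies $W(h)^\ast \phi[f] W(h) = \phi[f] + v[f]$. Hence the coherent vector $W(h)|\omega\rangle$ represents $\omega^v$, and $\omega^v\prec\omega$.

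\textbf{Necessity.} Suppose $\omega^v$ is represented by a density matrix $\rho$ on $\H_\omega$, and take $f_n$ with $\langle f_n|f_n\rangle_\mu\to0$. Then $\phi[f_n]|\omega\rangle\to0$ in $\H_\omega$. Hence $\phi[f_n]\to 0$ strongly on the dense set $\A_\omega'|\omega\rangle$, because the vector is separating for the commutant. Gaussianity gives something more robust: $W(f_n)\to\mathbf 1$ strongly on all of $\H_\omega$. This holds because $\|(W(f_n)-1)a'|\omega\rangle\|\to0$ for $a'$ in the commutant, the $W(f_n)$ are uniformly bounded, and so density of the vectors $a'|\omega\rangle$ extends the convergence to every vector.

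It follows that $\mathrm{tr}(\rho W(tf_n))\to1$ for every real $t$. But
\begin{equation}
\mathrm{tr}(\rho W(tf_n)) = e^{itv[f_n]}\omega(W(tf_n)) = e^{itv[f_n]}e^{-t^2\langle f_n|f_n\rangle_\mu/2},
\end{equation}
so $e^{itv[f_n]}\to1$ for all $t$. This should force $v[f_n]\to0$. The main obstacle is exactly this step: pointwise convergence of phases for every $t$ only gives $v[f_n]\to0$ modulo a subsequence issue. I would resolve it by integrating against a Gaussian in $t$ and applying dominated convergence, which yields $e^{-v[f_n]^2/2}\to1$. This argument never used purity, so necessity holds in general.

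\textbf{Mixed and centrally pure sufficiency.} Pass to the canonical purification $\H_\omega$ with the algebra $\A_\omega\vee\A_\omega'$, on which $\omega$ extends to a centrally pure Gaussian state. Extend $v$ by zero on the commutant's fields. This keeps it bounded in the purified one-particle norm, which restricts to $\mu$ on $\A_0$. The Weyl-operator construction from the pure case then applies on the doubled one-particle space. The care needed here is that the symplectic form may be degenerate on the center, so Riesz representation by a symplectic pairing can fail there. In that case the central part of $v$ is a classical shift, and I would implement it via a density ratio on the abelian algebra: the Gaussian measure is quasi-invariant under translations in its Cameron–Martin space, and that space is exactly the set of $\mu$-bounded directions.
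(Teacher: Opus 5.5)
Your pure-case coherent state matches the paper's representative $e^{\frac{i}{2}\phi[Rv]}|\omega\rangle$. Your Cameron--Martin treatment of the central part is the paper's abelian representative, which is the square root of the Cameron--Martin density. Averaging $e^{itv[f_n]}$ against a Gaussian in $t$ is a valid substitute for the paper's rescaling contradiction.

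\textbf{The mixed-state step has a genuine gap.} Extending $v$ by zero on the commutant's fields does \emph{not} keep it bounded. That the purified norm restricts to $\mu$ on the first copy says nothing about a functional that vanishes on a second copy which is not orthogonal to the first. Set $S=\sqrt{1+R^2}$ and use the purified inner product \eqref{eq:mu-hat}. Minimizing over $f_2$, at $f_2=-Sf_1$, gives $\inf_{f_2}\|f_1\oplus f_2\|_{\hat\mu}^2=\|f_1\|_\mu^2-\|Sf_1\|_\mu^2=\|Rf_1\|_\mu^2$. Hence $v\oplus 0$ is $\hat\mu$-bounded if and only if $|v[f]|\le C\|Rf\|_\mu$, that is, if and only if $|v\rangle_\mu\in\mathrm{Ran}\,R$. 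This is exactly the coherent-state condition whose failure for mixed states is the reason to purify. For abelian $\omega$ ($R=0$) it forces $v=0$.

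Concretely, for $f\in\ker R$ the vector $f\oplus(-f)$ is $\hat\mu$-null. A central field and its commutant copy are then the same operator, so no state can assign them the different one-point functions $v[f]$ and $0$. So $v\oplus 0$ is not even well defined on the center, and neither your Weyl step nor your Cameron--Martin step has a Riesz vector to act on.

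The repair is the symmetric extension $\hat v[f_1\oplus f_2]=v[f_1]+v[f_2]$. This is also the one-point function of the canonical purification of $\omega^v$. Its Riesz vector is $u\oplus u$ with $u=(1+\sqrt{1+R^2})^{-1}|v\rangle_\mu$, which is bounded because $1+\sqrt{1+R^2}\ge 1$. With that extension, the centrally pure construction followed by restriction (or \cite[theorem 1.11]{Caminiti:paper-2}) finishes the proof. Note also that $\H_\omega$ with $\A_\omega\vee\A_\omega'$ models the canonical purification only when $|\omega\rangle$ is separating.

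\textbf{The necessity argument rests on a false density claim.} Density of $\A_\omega'|\omega\rangle$ requires $|\omega\rangle$ to be separating for $\A_\omega$, not for the commutant. It fails for non-faithful states, for example pure $\omega$, where $\A_\omega'|\omega\rangle=\mathbb{C}|\omega\rangle$. So your claim that ``this argument never used purity'' is wrong as written. The conclusion $W(f_n)\to\mathbf 1$ strongly is still true. Verify it on the total set $W(g)|\omega\rangle$ using the Weyl relations. The extra phase $e^{-i\Omega[g,f_n]}$ tends to $1$ because $|\Omega[g,f]|\le 2\|g\|_\mu\|f\|_\mu$. This is the $\mu$-to-ultraweak continuity the paper invokes.
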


\begin{proof}
See section \ref{sec:one-sided-VEV}.
\end{proof}

\section{Setting up the toolkit}
\label{sec:background-info}

In this section, we briefly review the concepts in (generalized) free field theory needed below. We then prove that for Gaussian states, excitability requires the VEV difference to be appropriately bounded, and we explain how this will be used to reduce the problem of excitability for general Gaussian states to that for zero-mean Gaussian states.
We refer the reader to \cite[appendix A]{Caminiti:paper-2} or \cite{Fewster:2019ixc} for a more detailed introduction to the background material reviewed here.

\subsection{(Generalized) free field theory}
\label{ssec:free-fields}

A generalized free theory of a real scalar field $\phi$ is defined by an abstract $\ast$-algebra $\A_0$ generated by smeared fields $\phi[f]$. These fields satisfy linearity relations, the reality condition $\phi[f]^*=\phi[f^*]$, and the c-number commutation relation
\begin{equation}\label{eq:CCR}
    [\phi[f],\phi[g]]=-i\Omega[f,g].
\end{equation}
Different generalized free theories correspond to different choices of  the antisymmetric functional $\Omega$.
The theory of a Klein-Gordon scalar arises as a special case; there, $\Omega$ is chosen to be the Poisson bracket of the classical smeared field observables, and the equation of motion is imposed as a further algebraic relation.

The algebraic relations obeyed by smeared fields do not specify a canonical Hilbert space.
Instead, given a set of correlation functions $\omega$, a natural representation of $\A_0$ is provided by the GNS construction. The resulting Hilbert space $\H_\omega$ is the set of states accessible from $\omega$ by local excitations.
Within $\H_\omega$, one may then complete $\A_0$ into a  von Neumann algebra $\A_\omega$.\footnote{See \cite[appendix A]{Sorce:paper1} for a general explanation of how to complete a $*$-algebra into a von Neumann algebra within one of its GNS representations.
} 

Among the possible choices of state on $\A_0$, the simplest are Gaussian states. 
One starts with a set of correlators $\omega$ satisfying, as a formal power series in $\phi[f]$, the Gaussian property
\begin{equation}\label{eq:Gaussian-state}
\omega (e^{i\phi[f]}) =e^{-\frac{1}{2}\omega(\tilde\phi[f]^2)}e^{i\omega(\phi[f])},
\end{equation}
where $\tilde\phi[f]=\phi[f]-\omega(\phi[f])$ denotes the mean-subtracted field.
Expanding both sides in powers of $\phi[f]$ and matching order by order shows that all correlators of the form $\omega(\phi[f]^n)$ are determined by $\omega(\phi[f])$ and $\omega(\tilde\phi[f]^2)$.
By the multilinear polarization identities, together with the commutation relations \eqref{eq:CCR}, the same holds for correlators involving smearing functions which are not all equal.
Explicitly, correlators of mean-subtracted fields factorize into two-point functions, so that \eqref{eq:Gaussian-state} compactly encodes the familiar combinatorics of Wick contractions.

For equation \eqref{eq:Gaussian-state} to define the correlation functions of a quantum state, one needs to verify the positivity condition $\omega(a^*a)\ge 0$ for any $a\in \mathcal{A}_0$.
The c-number commutation relation \eqref{eq:CCR} guarantees that positivity is satisfied   
whenever it holds at the level of two-point functions of mean-subtracted fields, provided that one has $\omega(\phi[f])^*=\omega(\phi[f^*])$. 
When this is the case, we say that $\omega$ defines a Gaussian state on $\A_0$.

Consistency with equation \eqref{eq:CCR} fully fixes the antisymmetric part of the two-point function of $\omega$. Consequently, the two-point function is completely specified by its symmetric part.
This symmetric part defines a positive semidefinite inner product $\mu$ on the space of smearing functions, namely
\begin{equation}\label{eq:mu-def}
    \braket{f}{g}_\mu = \frac{1}{2}\left( \omega(\tilde\phi[f^*]\tilde\phi[g]) + \omega(\tilde\phi[g]\tilde\phi[f^*])\right).
\end{equation}
Quotienting by $\mu$-null smearing functions and taking limits leads to a Hilbert space $\K_\mu$, which encodes all the information about $\omega$ that is relevant for excitability.
Moreover, on $\K_\mu$, one can introduce an operator $R$ implementing the commutator $\Omega$ via the formula
\begin{equation}
\label{eq:R-operator-definition}
    \matrixel{f}{R}{g}_\mu=\frac{1}{2}\Omega[f^*, g].
\end{equation}
The operator $R$ is bounded, maps real smearing functions to real smearing functions, is anti-hermitian, and has spectrum lying between $-i$ and $i$.
Furthermore, the spectrum of $R$ diagnoses the purity of $\omega$: it is equal to $\{-i, i\}$ when $\omega$ is pure (\ie when its GNS representation introduces no purifying degrees of freedom, $\A_\omega'=\mathbb C$), and is equal to $\{-i, 0, i\}$ when $\omega$ is centrally pure (\ie when the only purifying degrees of freedom are limits of the original ones, $\A_\omega'\subseteq\A_\omega$).

Centrally pure states are simpler to study than general mixed states, and their importance stems from the fact that any state $\omega$ (even a non-Gaussian state in an interacting theory) can be treated as a subsystem of a centrally pure state $\hat \omega$, called the canonical purification  \cite{Sorce:paper1}. Physically, $\hat{\omega}$ corresponds to two copies of $\omega$ sewn together after a one-sided time-reversal, i.e., $\hat{\omega}$ is a state on the $*$-algebra $\hat{\A}_0 = \A_0 \otimes \A_0^{\text{op}}$. 

For generalized free fields, the $\ast$-algebra $\hat{\A}_0$ is naturally identified with a free $\ast$-algebra modeled on
the direct sum of two copies of the original space of smearing functions.
Under this identification, $\hat \omega$ is Gaussian whenever $\omega$ is Gaussian,\footnote{Explicitly, correlators in the state $\hat \omega$ are defined by the formal power series $\hat \omega(e^{i\phi[f\oplus g]})=\expval{e^{i\phi[f]}J_\omega e^{-i\phi[g]}}{\omega}$, where $J_\omega$ is the modular conjugation of $\omega$ with respect to $\A_\omega$ \cite[equation (5.2)]{Caminiti:paper-2}. Rewriting this in terms of mean-subtracted fields, we get $\expval{e^{i\tilde\phi[f]}J_\omega e^{-i\tilde\phi[g]}}{\omega} e^{i(\omega(\phi[f])+\omega(\phi[g]))}$. Using equation (5.16) in \cite{Caminiti:paper-2} --- where the tildes are omitted only because $\omega$ is zero-mean there --- we see that this has the form \eqref{eq:Gaussian-state} appropriate for a Gaussian state, with the one- and two-point functions specified in equations \eqref{eq:omega-hat-VEV} and \eqref{eq:mu-hat}.} with one-point function given by 
\begin{equation}\label{eq:omega-hat-VEV}
    \hat \omega(\phi[f_1\oplus f_2])=\omega(\phi[f_1])+\omega(\phi[f_2])
\end{equation}
and two-point function inducing the $\hat{\mu}$ inner product
\begin{equation}\label{eq:mu-hat}
    \braket{g_1\oplus g_2}{f_1\oplus f_2}_{\hat \mu}= \braket{g_1}{f_1}_\mu + \braket{g_2}{f_2}_\mu + \matrixel{g_1}{\sqrt{1+R^2}}{f_2}_\mu+\matrixel{g_2}{\sqrt{1+R^2}}{f_1}_\mu.
\end{equation}
This construction provides a key ingredient in our proofs thanks to the fact that excitability is preserved under canonical purification --- see \cite[theorem 1.11]{Caminiti:paper-2}.

Having understood the basic ingredients of (generalized) free theory, we now discuss coherent states, which are a useful class of Gaussian states with nonvanishing VEVs.
When discussing coherent states, it is useful to note that
in a (generalized) free field theory, the von Neumann algebra $\mathcal{A}_{\omega}$ is generated by the Weyl operators $e^{i \phi[f]}$ for real smearing functions $f$, which define unitaries on $\mathcal{H}_{\omega}$.
Moreover, because $\A_{\omega}$ is constructed by taking limits, it contains Weyl operators associated with any real element $|\psi\rangle_{\mu}$ of the completed phase space $\K_{\mu}$, and not just with smearing functions. 

Given a Gaussian state $\omega$ with vanishing VEV, one can define a coherent state\footnote{Using the split into creation and annihilation operators $\phi[f] = a[f]+a[f^*]^{\dag}$ (see for example appendix A.4 and C.2 of \cite{Caminiti:paper-2}), one has 
$a[f] e^{i \phi[g]}\ket{\omega}=i \omega(\phi[f]\phi[g]) e^{i \phi[g]}\ket{\omega}$.
See appendix \ref{app:abelian-excitability-combinatorics}.
This shows that equation \eqref{eq:Weyl-for-coherent} is compatible with the standard notion that coherent states are eigenstates of annihilation operators.} for each real element $\psi$ of $\mathcal{K}_{\mu}$ via the formula
\begin{equation}
  \ket{\omega_\psi}\equiv e^{i \phi[\psi]}\ket{\omega}.
  \label{eq:Weyl-for-coherent}
\end{equation}
For any real test function $f$, we have
\begin{equation}
\begin{aligned}
\langle \omega_\psi| e^{i \phi[f]}|
\omega_\psi\rangle
 &=
\langle \omega|e^{-i \phi[\psi]} e^{i \phi[f]} e^{i \phi[\psi]}|\omega\rangle\\
&=
\langle \omega| e^{i \phi[f]}|\omega\rangle e^{-i\Omega[\psi, f]}
\\
&=
\langle \omega| e^{i \phi[f]}|\omega\rangle
e^{-2i\langle \psi|R|f\rangle_{\mu}},
\end{aligned}
\end{equation}
where in the second equality we repeatedly applied the Baker-Campbell-Hausdorff formula, and in the final equality we used equations \eqref{eq:Gaussian-state} and \eqref{eq:R-operator-definition}.
Comparing again to equation \eqref{eq:Gaussian-state}, we conclude that the coherent state $\omega_\psi$ is a Gaussian state with VEV given by 
\begin{equation}
    \omega_\psi(\phi[f])
    =
    -2 \langle \psi|R|f\rangle_{\mu}\,,
    \label{eq:coherent-VEV-result}
\end{equation}
which in particular is fully specified by the choice of real vector $\ket{\psi}\in \mathcal{K}_{\mu}$.\footnote{
To see that the one-point function \eqref{eq:coherent-VEV-result} obeys the appropriate reality condition, we use equation \eqref{eq:R-operator-definition} to write
$
\omega_\psi(\phi[f])^*
    =
    -\Omega[\psi,f]^*
    =
     -\Omega[\psi,f^*]
     =
     \omega_\psi(\phi[f^*]),
$
where the second equality holds by equation \eqref{eq:CCR} and by the reality of $\ket{\psi}_{\mu}$.
}

So given an initial zero-mean Gaussian state $\omega$, 
and a real vector $|\psi\rangle_{\mu}$ in $\K_{\mu}$,
one can canonically construct a new Gaussian state $\omega_{\psi}$ with VEV determined by $|\psi\rangle_{\mu}$.
Note that all the VEVs constructed in this way are bounded with respect to $\mu$,
which follows immediately from the Cauchy-Schwarz inequality for the $\mu$ inner product.
In fact, 
we will see later that whenever a state has a $\mu$-bounded VEV,
it can always be realized, in a particular sense, as coming from a coherent state.
This is a key ingredient in the proofs to follow about excitability of Gaussian states with VEVs.

\subsection{A necessary criterion for excitability}
\label{ssec:nec-crit}

As a first step towards proving our excitability theorem \ref{thm:1}, 
we show that whenever it is possible to excite $\omega_2$ out of $\omega_1$, then the difference of their VEVs,
\begin{equation}
    v_{12}[f]=\omega_2(\phi[f])-\omega_1(\phi[f])\,,
\end{equation}
must be bounded as a functional on $\K_{\mu_1}$.

Excitability means we have a density matrix representative $\rho_{\omega_2}$ for  $\omega_2$ in the GNS Hilbert space of $\omega_1$. 
Then, for any real $f$,
\begin{equation}\label{eq:assume-excit}
    \tr_{\H_{\omega_1}}( \rho_{\omega_2}e^{i\phi[f]}) = \omega_2(e^{i\phi[f]})=e^{-\braket{f}_{\mu_2}/2}e^{i\omega_2(\phi[f])},
\end{equation}
where in the second equality we used equation \eqref{eq:Gaussian-state} together with \eqref{eq:mu-def}.
Applying this identity to a sequence of real smearing functions $f_n$ converging to zero in the $\mu_1$ topology, and isolating the VEV difference, gives
\begin{equation}
    e^{iv_{12}[f_n]}=\tr_{\H_{\omega_1}}\big( \rho_{\omega_2}e^{i\tilde\phi^{(1)}[f_n]}\big)e^{\braket{f_n}_{\mu_2}/2}\,.
    \label{eq:converging-to-1}
\end{equation}
Here we have defined the $\omega_1$-VEV shifted field $\tilde\phi^{(1)}[f]=\phi[f]-\omega_1(\phi[f])$.
Both factors on the right-hand side of equation \eqref{eq:converging-to-1} converge to one.
The trace factor does so because the map $f\mapsto e^{i\tilde\phi^{(1)}[f]}$ is continuous when one uses the $\mu_1$ topology on the domain and the ultraweak topology on the codomain (see \cite[equation (A.86)]{Caminiti:paper-2});
by taking absolute values it then follows that the exponential factor involving $\braket{f_n}_{\mu_2}$ also converges to one. 

Thus, whenever we have $\omega_2 \prec \omega_1$, \textit{any} real sequence $f_n$ that converges to zero in the $\mu_1$ inner product satisfies
\begin{equation}\label{eq:exp-i-v}
    e^{i v_{12}[f_n]}\to 1.
\end{equation}
We will now show that $v_{12}[f_n]$ must converge to zero for every such sequence, and therefore that $v_{12}$ is $\mu_1$-bounded.
We argue by contradiction: assuming that $v_{12}$ is unbounded, we choose a real sequence $g_n$ with $\braket{g_n}_{\mu_1}=1$ and $v_{12}[g_n]\to\infty$.
We then fix a real number $x$ which is not an integer multiple of $2 \pi,$ and define $f_n =\frac{x}{v_{12}[g_n]}g_n$.\footnote{Technically, we may need to pass to a subsequence to guarantee $v_{12}[g_n]\neq 0$.}
Then $f_n$ is real and converges to zero in the $\mu_1$ inner product, but we have $v_{12}[f_n]=x$ for all $n$, and hence $e^{iv_{12}[f_n]}=e^{ix}\neq 1$. This contradicts equation \eqref{eq:exp-i-v}, so we conclude that the VEV difference $v_{12}$ must be $\mu_1$-bounded.

\subsection{Proof strategy for excitability criteria}
\label{ssec:proof-strategy}

In the previous subsection, we showed that the excitability relation $\omega_2\prec\omega_1$ imposes a nontrivial constraint on the one-point functions of the states. Our goal is now to show that this constraint, together with the zero-mean excitability criteria of \cite{Caminiti:paper-2} applied to the VEV-subtracted two-point functions, give necessary and sufficient conditions for excitability. The key step is to leverage the necessary condition found in the previous section to reduce the excitability problem for Gaussian states with nonzero mean to the corresponding zero-mean problem.

To do so, we first introduce notation for shifting the one-point function of a Gaussian state while leaving its connected two-point function unchanged. Let $\omega$ be a Gaussian state. In equation \eqref{eq:Gaussian-state}, the one-point function enters only through the phase factor. Hence, given a linear functional $v$ satisfying the reality condition $v[f]^*=v[f^*]$, we define a VEV-shifted Gaussian state $\omega^v$ by
\begin{equation}\label{eq:omega-v-def}
    \omega^v(e^{i\phi[f]})=\omega(e^{i\phi[f]})e^{-iv[f]}.
\end{equation}
From this, correlation functions of fields $\phi[f]$ in the state $\omega^v$ coincide with those of the shifted fields $(\phi[f]-v[f])$  in the state $\omega$.\footnote{Since connected $n$-point functions for $n\ge 2$ are insensitive to c-number shifts, $\omega$ and $\omega^v$ share the same connected $n$-point functions for all $n\ge 2$. 
For $n\ge3$, this provides a clean restatement of the fact that $\omega^v$ is Gaussian if and only if $\omega$ is, since Gaussian states can equivalently be characterized by having vanishing connected $n$-point functions for $n\geq 3$.}
This set of correlators inherits positivity from $\omega$, and hence defines a Gaussian state with one-point function
\begin{equation}
    \omega^v(\phi[f])=\omega(\phi[f])-v[f]\,.
\end{equation}
Note that choosing $v[f]=\omega(\phi[f])$ subtracts away the full one-point function, yielding a zero-mean version of $\omega$, whose two-point function is given by the \textit{connected} two-point function of $\omega$.\footnote{
Explicitly, in the case where $v[f]=\omega(\phi[f])$, one has
\begin{equation}
\begin{aligned}
    \omega^v(\phi[f]\phi[g])
    =
    \omega\left((\phi[f]-\omega(\phi[f]))(\phi[g]-\omega(\phi[g]))\right)
    =
    \omega(\phi[f]\phi[g])-\omega(\phi[f])\omega(\phi[g])\,.
\end{aligned}
\label{eq:omega-v-as-connected-omega}
\end{equation}}
In particular, this means that $\omega^v$ and $\omega$ induce the same $\mu$ inner product on the space of smearing functions (equation \eqref{eq:mu-def}).

Given two Gaussian states $\omega_1, \omega_2$, with respective means $v_1, v_2$, our main claim (theorem \ref{thm:1}) is that excitability reduces to its zero-mean counterpart --- which was solved in \cite{Caminiti:paper-2} --- up to a boundedness condition on the difference of one-point functions.
That is,
\begin{equation}\label{eq:main-claim}
\quad \quad\omega_2 \prec \omega_1 \quad \Longleftrightarrow\quad 
\begin{aligned}
&\omega_2^{v_2} \prec \omega_1^{v_1}, \;\text{and}\\
&
\omega_2(\phi[f]) - \omega_1(\phi[f])
 \text{ is }\mu_1\text{-bounded}.
\end{aligned}
\end{equation}
To prove this, we first note that 
the necessity of the VEV boundedness criterion was proven in section \ref{ssec:nec-crit}.
Accordingly, it suffices to prove
\begin{equation}\label{eq:main-claim-reduced}
\quad \quad\omega_2 \prec \omega_1 \iff  
\omega_2^{v_2} \prec \omega_1^{v_1}, \quad \text{ if $\omega_2(\phi[f]) - \omega_1(\phi[f])$ is $\mu_1$-bounded.}
\end{equation}

We then break the proof of \eqref{eq:main-claim-reduced} into two parts, established in sections \ref{sec:simult-VEV} and \ref{sec:one-sided-VEV} respectively.
First, we prove (as lemma \ref{lem:1}) that excitability is preserved under any simultaneous VEV shift:
\begin{equation}
    \omega_2 \prec \omega_1 \iff \omega_2^v \prec \omega_1^v.
\end{equation}
Second, we prove that any $\mu$-bounded VEV can be excited out of a zero-mean Gaussian state:
\begin{equation}
    v[f] \text{ is } \mu\text{-bounded} \implies  
     \omega^v \prec \omega,
\end{equation}
thereby completing the proof of lemma \ref{lem:2}.
We get equation \eqref{eq:main-claim-reduced} by combining these two facts as follows.

To show the $\implies$ direction of equation \eqref{eq:main-claim-reduced}, assume $\omega_2\prec\omega_1$. By lemma \ref{lem:1}, applied with $v=v_2$, we have $\omega_2^{v_2}\prec \omega_1^{v_2}$.
Moreover, since $v_1-v_2$ is $\mu_1$-bounded by assumption, lemma \ref{lem:2} implies
$\omega_1^{v_2}\prec \omega_1^{v_1}$.
Thus
\begin{equation}
\omega_2^{v_2}\prec \omega_1^{v_2}\prec \omega_1^{v_1},
\end{equation}
and by transitivity of excitability we conclude $\omega_2^{v_2}\prec\omega_1^{v_1}$.

Conversely, assume $\omega_2^{v_2}\prec\omega_1^{v_1}$. By lemma \ref{lem:1}, undoing the common shift by $v_2$, we obtain $\omega_2\prec \omega_1^{v_1-v_2}$.
Since $v_1-v_2$ is $\mu_1$-bounded, lemma \ref{lem:2} then implies $\omega_1^{v_1-v_2}\prec \omega_1$.
Therefore
\begin{equation}
\omega_2\prec \omega_1^{v_1-v_2}\prec \omega_1,
\end{equation}
and transitivity gives $\omega_2\prec\omega_1$.

Lemmas \ref{lem:1} and \ref{lem:2} will be proved in the following sections.
For now, we  conclude this section with a remark.
In \cite{Caminiti:paper-2}, it was found that excitability of zero-mean Gaussian states in free theories always goes both ways; that is, if $\omega_2^{v_2}$ is excitable from $\omega_1^{v_1}$, then the inverse statement holds.\footnote{This is not true in the general setting; see \cite{Caminiti:paper-2} for examples of one-way excitability in QFT.
}
Assuming our main claim in \eqref{eq:main-claim}, we may now extend this result to general Gaussian states.
For if we have $\omega_2 \prec \omega_1$, then equation \eqref{eq:main-claim} implies $\omega_2^{v_2} \prec \omega_1^{v_1}$, and the result in \cite{Caminiti:paper-2} gives $ \omega_1^{v_1}\prec \omega_2^{v_2}$.
As discussed in \cite{Caminiti:paper-2}, mutual excitability implies that $\mu_1$ and $\mu_2$ define equivalent topologies on the space of smearing functions. 
Hence, the VEV boundedness condition required for $\omega_2 \prec \omega_1$ in equation \eqref{eq:main-claim} can be equivalently rewritten in terms of the $\mu_2$ inner product.
Thus, if we have $\omega_2 \prec \omega_1$, then we have all the criteria specified in equation \eqref{eq:main-claim} for the converse relation $\omega_1 \prec \omega_2$ to hold.

\section{Excitability under simultaneous VEV shifts}
\label{sec:simult-VEV}

Let $\omega_1$ and $\omega_2$ be two Gaussian states on the generalized free field $\ast$-algebra $\A_0$.
In this section we prove lemma \ref{lem:1}, namely that excitability is preserved under an arbitrary VEV shift $v$, provided the shift is applied \ql simultaneously'' to both states:
\begin{equation}
    \omega_2 \prec \omega_1 \iff \omega_2^v \prec \omega_1^v.
\end{equation}
It suffices to prove one implication, since the converse follows by replacing $v$ with $-v$.
We therefore assume that $\omega_2$ can be excited out of $\omega_1$.
This means --- adapting the algebraic characterization of excitability given in \cite[proposition 1.2]{Caminiti:paper-2} to the free-field setting (see also  \cite[remark 1.3]{Caminiti:paper-2}) --- that we have ultraweak continuity of the linear map $\alpha$ that takes a Weyl operator $e^{i\phi[f]}$ on $\H_{\omega_1}$ to the \ql same'' Weyl operator $e^{i\phi[f]}$ on $\H_{\omega_2}$.

We want to show that the same continuity property holds for the analogous \ql identity map'' $\alpha^v$ that sends Weyl operators on $\H_{\omega_1^v}$ to Weyl operators on $\H_{\omega_2^v}$.
It will be useful to notice that by linearity, the original map  $\alpha$ acts as the identity also on $v$-shifted Weyl operators; that is, it maps $e^{i \tilde{\phi}^v[f]}$ on $\H_{\omega_1}$ to $e^{i \tilde{\phi}^v[f]}$ on $\H_{\omega_2}$, 
where as in the preceding section we have introduced $\tilde\phi^v[f]=\phi[f]-v[f]$.

This rewriting makes apparent the existence of ultraweakly continuous intertwiners $\mathbf{U}_j$ rendering the following diagram commutative
\begin{center}
\begin{tikzpicture}[>=stealth, line width=.7pt]

\node (omega1) at (1,3.3) {$e^{i\tilde{\phi}^v[f]}\circlearrowleft \H_{\omega_1}$};
\node (alpha) at (3.1,3.55) {$\alpha$};
\node (omega2)  at (5.2,3.3) {$e^{i\tilde{\phi}^v[f]}\circlearrowleft \H_{\omega_2}$};
\node (alphav) at (3.1,1.1) {$\alpha^v$};

\node (omega1v) at (1,0.8) {$e^{i\phi[f]}\circlearrowleft \H_{\omega_1^v}$};

\node (omega2v)  at (5.3,0.8) {$e^{i\phi[f]}\circlearrowleft \H_{\omega_2^v}$};

\draw[->] (omega2.south) -- (omega2.south |- omega2v.north);
\node at (5.5,2.1) {$\,\,\mathbf{U}_2$};
\draw[->] (omega1.south) -- (omega1.south |- omega1v.north);
\node at (0.65,2.1) {$\mathbf{U}_1$};

\draw[->] (omega1.east) -- (omega2.west |- omega2.west);
\draw[->] (omega1v.east) -- (omega2v.west |- omega2v.west);

\end{tikzpicture}
\end{center}
so that $\alpha^v$ is related to $\alpha$ by
\begin{equation}\label{eq:alphav-alpha}
    \alpha^v= \mathbf{U}_2 \circ \alpha \circ \mathbf{U}_1^{-1}.
\end{equation}
Concretely, we construct $\mathbf U_j$ via conjugation by the unitary $U_j:\H_{\omega_j}\to\H_{\omega_j^v}$ defined on the set of coherent states via
\begin{equation}
    U_je^{i\tilde\phi^v[f]}\ket{\omega_j} = e^{i\phi[f]}\ket*{\omega_j^v}.
\end{equation}
To see that $U_j$ extends to an isometry, it is enough to check that it preserves inner products on this total set.
Using the Baker--Campbell--Hausdorff formula together with \eqref{eq:omega-v-def}, one verifies that indeed we have
\begin{equation}
\bra*{e^{i\tilde\phi^v[g]}\omega_j}\ket{e^{i\tilde\phi^v[f]}\omega_j}
    =
\bra*{e^{i\phi[g]}\omega_j^v} 
\ket*{e^{i\phi[f]}\omega_j^v}.
\end{equation}
Thus $U_j$ extends to an isometry from $\H_{\omega_j}$ into $\H_{\omega_j^v}$. Its range contains the dense set of vectors spanned by the coherent states $e^{i\phi[f]}\ket*{\omega_j^v}$; consequently, $U_j$ is unitary. 

To see that $\mathbf{U}_j$ acts on Weyl operators as desired, we write
\begin{equation}
    \begin{aligned}
        U_j e^{i\tilde\phi^v[f]} U_j^{-1} e^{i\phi[g]}\ket*{\omega_j^v} &= U_j e^{i\tilde\phi^v[f]}e^{i\tilde\phi^v[g]}\ket{\omega_j}\\
        &=e^{\frac{i}{2}\Omega[f,g]}U_je^{i\tilde\phi^v[f+g]}\ket{\omega_j}\\
        &=e^{\frac{i}{2}\Omega[f,g]}e^{i\phi[f+g]}\ket*{\omega_j^v}\\
        &= e^{i\phi[f]} e^{i\phi[g]}\ket*{\omega_j^v}.
    \end{aligned}
\end{equation}
So indeed $\mathbf{U}_j$ sends any shifted Weyl operator $e^{i \tilde\phi^v[f]}$ on $\H_{\omega_j}$  to the corresponding Weyl operator $e^{i \phi[f]}$ on $\H_{\omega_j^v}$ --- from this, equation \eqref{eq:alphav-alpha} immediately follows.

Conjugation by a unitary is ultraweakly continuous, therefore so are the maps $\mathbf{U}_j$. 
Equation \eqref{eq:alphav-alpha} then expresses $\alpha^v$ as a composition of ultraweakly continuous maps, so that $\alpha^v$ itself must be ultraweakly continuous, and we have $\omega_2^v \prec \omega_1^v$.

\section{Excitability under one-sided VEV shifts}
\label{sec:one-sided-VEV}

In this section, we investigate when it is possible to excite a VEV out of a zero-mean Gaussian state $\omega$.
From section \ref{ssec:nec-crit}, we already have that for a VEV to be excited out of a general Gaussian state,
it must be $\mu$-bounded.
Our goal in this section is to prove that, in fact, \textit{any} bounded VEV can be excited out of a zero-mean Gaussian state:
\begin{equation}
  v[f] \text{ is bounded with respect to } \mu\implies  \omega^v \prec \omega \,,
\end{equation}
thereby proving lemma \ref{lem:2}.
We establish this relation using canonical purification techniques along the lines of \cite{Caminiti:paper-2}.

Explicitly, we begin with two simple cases --- pure Gaussian states  on a general free $*$-algebra $\A_0$, and general Gaussian states on an abelian $\ast$-algebra $\A_0$.
We then combine these cases to prove the result for centrally pure $\omega$.
Leveraging the characterization of excitability via canonical purifications from \cite{Caminiti:paper-2}, we then prove the general case of the lemma.

\subsection{Pure state excitability}

From equation \eqref{eq:coherent-VEV-result}, we have that with $\ket{\psi}$ a general real vector in $\K_\mu$, the  coherent state
\begin{equation}
    \ket{\omega_\psi}\equiv e^{i \phi[\psi]}\ket{\omega}
\end{equation}
in $\mathcal{H}_{\omega}$ has VEV given by
\begin{equation}
    \omega_{\psi}(\phi[f]) = -2 \langle \psi |R|f\rangle_{\mu}=2\braket{R\psi}{f}_\mu,
\end{equation}
where in the second equality we used the anti-hermiticity of $R$.
In the present context, we are interested in the state $\omega^v$, which has VEV given by a bounded functional  $-v[f]$.
Crucially, by the Riesz representation theorem, boundedness of $v[f]$ with respect to $\mu$ means that the VEV shift lives as a (real)\footnote{ To check that $\ket{v}_{\mu}$ is real, write it in terms of real vectors as $\ket{v}_{\mu}=\ket{v_R}_{\mu}+i \ket{v_I}_{\mu}$.
Then, the reality condition $v[f]=v[f]^*$ for $f$ real implies
\begin{equation}
0=
\langle v|f\rangle_{\mu}-\langle v|f\rangle_{\mu}^*
=
\left(\langle v_R|f\rangle_{\mu}
    -
    i\langle v_I|f\rangle_{\mu}\right)
    -
\left(
    \langle v_R|f\rangle_{\mu}
    +
    i\langle v_I|f\rangle_{\mu}\right)
    =-2i\langle v_I|f\rangle_{\mu}
    \,,
\end{equation}
for every real $f$, where we have used that the $\mu$ inner product of two real vectors is real.
By complex linearity, we have $\langle v_I|f\rangle_{\mu}=0$ also for complex  $f$, so clearly $\ket{v_I}_{\mu}$ vanishes.
} vector $\ket{v}_\mu$  within the Hilbert space of smearing functions $\K_\mu$;  that is,
\begin{equation}\label{eq:Riesz-VEV}
    v[f]=\braket{v}{f}_\mu.
\end{equation}
Hence, it is natural to ask whether there is a choice of $\psi$ such that we have $\omega_{\psi}(\phi[f]) = -v[f]$, or equivalently,
\begin{equation}
    2\braket{R\psi}{f}_\mu=-\braket{v}{f}_\mu,
\end{equation}
for all $f \in \mathcal{K}_{\mu}$.

For a generic state $\omega$, the image of $R$ need not be dense in ${\mathcal K}_{\mu}$, so the above equation cannot be solved for $\ket\psi_\mu$.
If $\omega$ is pure, however, then $R$ is invertible,\footnote{See \cite[appendix A.8]{Caminiti:paper-2} for a discussion of the properties of $R$ in various cases.} with inverse $-R$, so that the solution is $\ket{\psi}_\mu=R\ket v _\mu/2$.
Therefore, in the pure case, boundedness of $v$ is sufficient to guarantee that $\omega^v$ be excitable from $\omega$, and an explicit representative is given by
\begin{equation}\label{eq:pure-avatar}
    \ket*{\omega^v}=e^{\frac{i}{2}\phi[Rv]}\ket{\omega}.
\end{equation}

\subsection{Abelian excitability}

Let $\omega$ be a zero-mean Gaussian state on an abelian free field algebra, \ie a free $*$-algebra with $\Omega=0$.
In this case, the coherent state ansatz adopted in the pure case clearly cannot work: not only is $R$ not invertible --- in fact it vanishes identically --- but we also cannot hope to excite $\omega^v$ out of $\omega$ by acting with Weyl unitaries, since these lie in the center and hence leave expectation values unchanged.

Motivated by the fact that it was the exponential nature of the ansatz in \eqref{eq:pure-avatar} which encoded the combinatorics needed to reproduce $\omega^v$, we now consider an ansatz of the form  
\begin{equation}\label{eq:abel-ansatz}
    \ket*{\omega^v}=e^{-\braket{\chi}_\mu}e^{\phi[\chi]}\ket\omega
\end{equation}
for some real $\ket{\chi}$ in $\K_\mu$.
Here the c-number prefactor is required to ensure that the vector has unit norm.\footnote{The exponential of the self-adjoint operator $\phi[\chi]$ is defined by the spectral theorem, and is itself a positive self-adjoint operator, albeit generally only densely defined on $\H_\omega$. The GNS vector $\ket{\omega}$ belongs to its domain, as witnessed by the finiteness of the normalization factor; the action of $e^{\phi[\chi]}$ on $|\omega\rangle$ can be computed explicitly as a convergent power series as in appendix \ref{app:abelian-excitability-combinatorics}.} 

In appendix \ref{app:abelian-excitability-combinatorics}, we compute the general identity
\begin{equation}
\label{eq:weyl-in-exp-g}
    \langle e^{\phi[g]}\omega|e^{i \phi[f]}|e^{\phi[g]}\omega\rangle
    =
e^{-\frac{1}{2}\omega(\phi[f-2i g]^2)}
\end{equation}
for any $f$ and $g$ real.
This implies 
\begin{equation}
\begin{aligned}
\label{eq:abelian-ansatz-check}
        \expval*{e^{i\phi[f]}}{\omega^v} 
        &=e^{-2\braket{\chi}_\mu-\frac{1}{2}\omega(\phi[f-2i\chi]^2)}.
\end{aligned}
\end{equation}
Expanding the exponent in terms of the $\mu$ inner product \eqref{eq:mu-def} and using the commutativity of the algebra then gives
\begin{equation}
\begin{aligned}
        \expval*{e^{i\phi[f]}}{\omega^v} &= e^{-\frac{1}{2}\braket{f}_\mu}e^{2i\braket{\chi}{f}_\mu}\\
        &=\expval*{e^{i\phi[f]}}{\omega} e^{2i\braket{\chi}{f}_\mu}.
\end{aligned}
\end{equation}
Comparing to equation \eqref{eq:Gaussian-state}, we see that for the exponential ansatz \eqref{eq:abel-ansatz} to be a representative of $\omega^v$, it remains to tune $\ket{\chi}_\mu$ so that the correct VEV shift is enforced:
\begin{equation}
    2 \langle \chi |f\rangle_{\mu} =  -  \langle v|f\rangle_{\mu}\\,
\end{equation}
where we are again using the discussion above equation \eqref{eq:Riesz-VEV} to express the bounded VEV $v$ in terms of a real vector $\ket{v}_{\mu}$.
Clearly it suffices to pick $\ket{\chi}_\mu = -\ket v_\mu /2$.

To summarize, we find that for abelian free field algebras, boundedness of $v$ is sufficient to guarantee that $\omega^v$ is excitable from $\omega$, and an explicit representative is given by 
\begin{equation}
    \ket*{\omega^v} = e^{-\frac{1}{4}\braket{v}_\mu}e^{-\frac{1}{2}\phi[v]}\ket{\omega}.
\end{equation}

\subsection{Centrally pure excitability}

We now address the case in which $\omega$ is centrally pure by combining the results of the previous two subsections.
We begin by decomposing the completed phase space $\mathcal{K}_{\mu}$ into the kernel of $R$ and its orthogonal complement:
\begin{equation}
    \mathcal{K}_{\mu}=\mathcal{K}_{\mu}^{0}\oplus \mathcal{K}_{\mu}^\perp\,.
    \label{eq:orthogonal-decomp-of-K-mu}
\end{equation}
Correspondingly, we decompose 
the VEV-shift $\ket{v}_\mu$ as
\begin{equation}\label{eq:split-centr-pure}
    \ket v_\mu = \ket{v_0}_\mu + \ket{v_\perp}_\mu.
\end{equation}
We then claim that the desired representative of $\omega^v$ is given by  the following vector in $\H_\omega$:
\begin{equation}\label{eq:avatar-centr-pure}
    \ket{\omega^v}= e^{-\frac{1}{4}\braket{v_0}_\mu}e^{\frac{i}{2}\phi[Rv_\perp]}e^{-\frac{1}{2}\phi[v_0]}\ket{\omega}.
\end{equation}

The rationale behind this choice is as follows. 
Without modifying the von Neumann algebra of physical observables, one can extend the $\ast$-algebra so as to also include field operators $\phi[\psi]$ for every real $\ket{\psi}_{\mu}\in \K_\mu$. 
The resulting $\ast$-algebra factorizes into an abelian algebra generated by the kernel of $R$, and an algebra generated by its orthogonal complement.
Moreover, since $\psi_0$ and $\psi_{\perp}$ are real, and since $\phi[\psi_0]$ and $\phi[\psi_{\perp}]$ commute, mixed correlators of the form $\omega(\phi[\psi_{\perp}]\phi[\psi_{0}])$
reduce to the associated $\mu$ inner products, which vanish by definition of the decomposition \eqref{eq:orthogonal-decomp-of-K-mu}.
Consequently, the state $\omega$ factorizes into its restrictions to these two subalgebras, namely as $\omega \cong \omega_0\otimes \omega_\perp$, with $\omega_\perp$ pure. The shifted state $\omega^v$ factorizes analogously as the product of these restrictions with one-point functions shifted by the two summands in \eqref{eq:split-centr-pure}, respectively; namely, into  $\omega_0^{v_0}\otimes \omega_\perp^{v_\perp}$.
The state $|\omega^v\rangle$ in equation \eqref{eq:avatar-centr-pure} is simply the tensor product of the representatives of the individual states $\omega_0^{v_0}$ and $\omega_{\perp}^{v_{\perp}}.$

To explicitly verify that equation \eqref{eq:avatar-centr-pure} is indeed a representative for $\omega^v$, it suffices to check that it gives the correct expectation values for Weyl operators $e^{i\phi[f]}$, with real smearing functions $f$. 
We begin by writing these expectation values explicitly as
\begin{equation}
    \begin{aligned}
        \expval*{e^{i\phi[f]}}{\omega^v}
        &= e^{- \frac{1}{2} \langle v_0 | v_0\rangle_{\mu}}
        \langle e^{- \frac{1}{2} \phi[v_0]} \omega | e^{-\frac{i}{2} \phi[R v_{\perp}]} e^{i \phi[f]} e^{\frac{i}{2} \phi[R v_{\perp}]} | e^{- \frac{1}{2} \phi[v_0]} \omega \rangle.
    \end{aligned}
\end{equation}
We can combine the unitary operators by repeated application of the Baker-Campbell-Hausdorff formula to obtain
\begin{equation}
    \begin{aligned}
        \expval*{e^{i\phi[f]}}{\omega^v}
        &= e^{- \frac{1}{2} \langle v_0 | v_0\rangle_{\mu}} e^{i \langle f | R^2 | v_{\perp}\rangle_{\mu}}
        \langle e^{- \frac{1}{2} \phi[v_0]} \omega | e^{i \phi[f]}  | e^{- \frac{1}{2} \phi[v_0]} \omega \rangle.
    \end{aligned}
\end{equation}
We then apply our general formula \eqref{eq:weyl-in-exp-g} to compute the remaining expectation value as
\begin{equation}
    \begin{aligned}
        \expval*{e^{i\phi[f]}}{\omega^v}
        &= e^{- \frac{1}{2} \langle v_0 | v_0\rangle_{\mu}} e^{i \langle f | R^2 | v_{\perp}\rangle_{\mu}}
        e^{- \frac{1}{2} \omega(\phi[f+ i v_0]^2)}.
    \end{aligned}
\end{equation}
To simplify, we split $f$ according to decomposition \eqref{eq:orthogonal-decomp-of-K-mu} as $f = f_0 + f_{\perp}$, and use the fact that $\phi[v_0]$ commutes with $\phi[f]$ to rewrite this as
\begin{equation}
    \begin{aligned}
        \expval*{e^{i\phi[f]}}{\omega^v}
        &= e^{- \frac{1}{2} \langle v_0 | v_0\rangle_{\mu}} e^{i \langle f | R^2 | v_{\perp}\rangle_{\mu}}
        e^{-\frac{1}{2} (\langle f | f \rangle_{\mu} - \langle v_0 | v_0 \rangle_{\mu})}e^{- i \langle v_0 | f_0 \rangle_{\mu}}.
    \end{aligned}
\end{equation}
Simplifying, and using that $R$ squares to $-1$ on its support, we have
\begin{equation}
    \begin{aligned}
        \expval*{e^{i\phi[f]}}{\omega^v}
        &= e^{-\frac{1}{2} \langle f | f \rangle_{\mu}} e^{- i \langle f_{\perp} | v_{\perp}\rangle_{\mu}}
        e^{- i \langle v_0 | f_0 \rangle_{\mu}}.
    \end{aligned}
\end{equation}
We may finally use the form of the Weyl-operator expectation values of $\omega,$ together with the fact that the $\mu$ inner product is symmetric on real vectors, to obtain the desired equality
\begin{equation}
    \begin{aligned}
        \expval*{e^{i\phi[f]}}{\omega^v}
        &= \langle \omega | e^{i \phi[f]} |\omega\rangle e^{- i \langle v | f\rangle_{\mu}}.
    \end{aligned}
\end{equation}

Since our final expression coincides with the desired expectation  value in the algebraic state $\omega^v$, we conclude that for centrally pure states, $\mu$-boundedness of the VEV-shift is sufficient to construct an explicit representative for the excitability relation $\omega^v\prec \omega$.

\subsection{General excitability}

To address the problem of exciting a VEV shift $v$ out of a general zero-mean Gaussian state $\omega$, we pass to canonical purifications, thereby reducing the question to the centrally pure VEV-excitability problem solved in the previous subsection.
Here, we are using that by \cite[theorem 1.11]{Caminiti:paper-2}, excitability is preserved under canonical purification: one state can be excited out of another if and only if the same relation holds for their canonical purifications.
Moreover, by \cite[lemma 4.1]{Sorce:paper1}, canonical purifications are always centrally pure.

In the present setting, where $\omega$ has zero mean, the canonical purification of $\omega$  is a centrally pure zero-mean Gaussian state $\hat \omega$.
Meanwhile, the canonical purification of the VEV-shifted state $\omega^v$ is the centrally pure Gaussian state $\hat \omega^{\hat v}$ obtained by shifting the one-point function of  $\hat \omega$ by
\begin{equation}\label{eq:hat-v}
    \hat v[f_1\oplus f_2]\equiv v[f_1]+v[f_2],
\end{equation}
as in equation \eqref{eq:omega-hat-VEV}.
To finish the proof of lemma \ref{lem:2}, we must show that if the original VEV shift $v$ is $\mu$-bounded, then $\hat \omega^{\hat v}$ can be excited out of $\hat \omega$. 
Since this is a centrally pure excitability question, we may employ the result of the preceding subsection --- our aim is therefore to show that if $v$ is $\mu$-bounded, then $\hat{v}$ is $\hat{\mu}$-bounded.

Going forward, we will assume that $v$ is $\mu$-bounded.
To show $\hat{\mu}$-boundedness of $\hat{v}$, we will seek a vector $|\hat{v}\rangle_{\hat{\mu}}$ satisfying
\begin{equation}\label{eq:Riesz-hat}
    \hat v[f_1\oplus f_2] = \braket{\hat v }{f_1\oplus f_2}_{\hat \mu}.
\end{equation}
We look for such a representative in the form
\begin{equation}
    \ket*{\hat v}_{\hat \mu} = \ket{u_1 \oplus u_2}_{\hat \mu}.
    \label{eq:ansatz-for-v-hat}
\end{equation}
Using $\mu$-boundedness of $v$ to produce a vector $|v\rangle_{\mu}$ implementing the VEV $v$ within $\K_{\mu}$, the equation to satisfy reads
\begin{equation}
    \begin{aligned}
        \braket{v}{f_1}_\mu +\braket{v}{f_2}_\mu  &=\braket*{\hat v}{f_1\oplus f_2 }_{\hat \mu}\\
        &= \braket{u_1}{f_1}_\mu +\braket{u_2}{f_2}_\mu + \braket*{u_1}{\sqrt{1+R^2}f_2}_\mu+\braket*{u_2}{\sqrt{1+R^2}f_1}_\mu\\
        &= \braket*{u_1 + \sqrt{1+R^2}u_2}{f_1}_\mu + \braket*{u_2 + \sqrt{1+R^2}u_1}{f_2}_\mu.
    \end{aligned}
\end{equation}
In the second equality we have used the explicit form of the $\hat \mu$ inner product, given in equation \eqref{eq:mu-hat}; in the third equality, we have used the fact that $R$ is anti-hermitian.

It follows that the ansatz in equation \eqref{eq:ansatz-for-v-hat} represents $\hat v$ if we choose  
\begin{equation}
\ket{u_1}_\mu=\ket{u_2}_\mu=\big(1+\sqrt{1+R^2}\big)^{-1}\ket{v}_\mu.
\end{equation}
The inverse in this expression exists as a bounded operator because  the spectrum of $R$ is contained between $-i$ and $+i$.
This gives the required Riesz representative $\ket{\hat v}_{\hat\mu}$, thereby showing that $\hat v$ is $\hat\mu$-bounded, as desired.

\section{Application: free fields in AdS}
\label{sec:large-N}

So far, we have developed a general understanding of excitability relations among Gaussian states with VEVs in (generalized) free field theories.
In this section, we apply our results to the case of free fields in AdS.
We find that for a VEV to be excited from the Poincar\'e AdS vacuum, its Klein-Gordon mode profile must be $L^2$-normalizable in momentum space.
We then show explicitly that this is equivalent to holographic excitability of the boundary VEV defined via the extrapolate dictionary.

\subsection{Coherent excitability in AdS}

Consider a free Klein-Gordon field in the Poincaré patch of $(d+1)$-dimensional anti-de Sitter space, with Poincaré coordinates $X=(z, x)$.
To define the free field $\ast$-algebra $\A_0$, it is not enough to specify the Klein-Gordon equation in the bulk: one must also choose a boundary condition at the conformal boundary $z=0$ that uniquely specifies the \ql advanced-minus-retarded\qr kernel defining the commutator $\Omega$.

Following \cite{BF, Balasubramanian:Lorentzian, Klebanov:1999tb}, we label this choice by a parameter $\nu>-1$ related to the field mass by $m^2=-d^2/4 +\nu^2$, and impose
\begin{equation}
    \lim_{z\to 0} z^{-\Delta}\phi(z,  x) = \text{finite}, \qquad \Delta = d/2 + \nu.
    \label{eq:boundary-condition-ads}
\end{equation}
As noted first in \cite{BF}, we must have $\nu > -1$ for consistency of the theory, but both choices of $\nu$ satisfying the equation $m^2 = -d^2/4 + \nu^2$ are allowed in the ``BF regime'' $-d^2/4 < m^2 < 1 - d^2/4.$
The boundary condition of equation \eqref{eq:boundary-condition-ads}, together with a choice of $\nu$ if one is in the BF regime, fixes the commutator $\Omega$ to be\footnote{Explicit computation shows that if one takes the canonically normalized conjugate momentum $\pi=z^{1-d}(\partial/\partial t)\phi$, and restricts to an equal time slice, then the expression $-i\Omega(X_1, X_2)$ reproduces the expected equal-time canonical commutation relation $[\phi, \pi] = i \delta$.
}
 \begin{equation}
    \Omega(X_1, X_2)= 2\pi(z_1z_2)^{d/2}\int_{V_+^*}\frac{\dd{^d k}}{(2\pi)^d}J_\nu(z_1\sqrt{|k^2|})J_\nu(z_2\sqrt{|k^2|})\sin(k_{\mu}(x_1- x_2)^{\mu})\,.
    \label{eq:ads-Omega}
\end{equation}
In this expression, $V_+^*\equiv (V_+)^*$ is the dual of the future light cone in $d$-dimensional Minkowski spacetime; i.e., we have $k_\mu t^{\mu} \geq 0$ for all future-pointing causal vectors $t^{\mu}.$
Together with the bulk equation of motion, equation \eqref{eq:ads-Omega} defines the Klein-Gordon $\ast$-algebra $\A_0$ through the canonical commutation relations.

The AdS-invariant vacuum is the zero-mean Gaussian state on $\A_0$ with two-point function defined distributionally by\footnote{See for example \cite[equation $(4.6)$]{Duetsch:2002hc}, though note that this reference uses mostly-minuses signature.
}
 \begin{equation}
    \omega (\phi(X_1)\phi(X_2))= \pi (z_1z_2)^{d/2}\int_{V_+^*}\frac{\dd{^d k}}{(2\pi)^d}J_\nu(z_1\sqrt{|k^2|})J_\nu(z_2\sqrt{|k^2|})e^{-i k_{\mu}(x_1- x_2)^{\mu}}\,,
    \label{eq:ads-2pt-func}
\end{equation}
where $J_{\nu}$ is a Bessel function of the first kind.\footnote{To see that this is the correct expression for the vacuum, one first checks that it satisfies the equations of motion for the Klein-Gordon field in AdS$_{d+1},$ with boundary conditions matching those in equation \eqref{eq:boundary-condition-ads}, and appropriate regularity conditions in the bulk.
One then checks that it is invariant under the identity-connected isometries of Poincar\'{e} AdS$_{d+1}$,
which in particular restricts the integration domain to be $V_+^*$, $V_-^*$, or their union.
The choice to integrate only over $V_+^*$ is a form of the spectrum condition appropriate for AdS$_{d+1}$ --- it guarantees that the two-point function is only supported on ``physical particles.''
Finally, the overall normalization is fixed by consistency with the commutator \eqref{eq:ads-Omega} and the state normalization $\omega(1)=1$.}

To smear $\omega$ against bulk test functions $F_1$, $F_2$, we use the volume form $\sqrt{|g|}\dd{^{d+1}X} = z^{-(d+1)}\dd{z}\dd{^dx},$ and write
\begin{align}
    \begin{split}
    \frac{1}{\pi} \omega(\phi[F_1]\phi[F_2])
        & = \int_{V_+^*}\frac{\dd{^d k}}{(2\pi)^d} \int \dd{z_1} \dd{^dx_1} z_1^{-(d+1)}\int \dd{z_2} \dd{^dx_2} z_2^{-(d+1)}(z_1 z_2)^{d/2}\\
        & \qquad \qquad \times J_\nu(z_1\sqrt{|k^2|})J_\nu(z_2\sqrt{|k^2|})e^{-i k_{\mu}(x_1- x_2)^{\mu}} F_1(z_1,x_1) F_2(z_2,x_2)\,.
    \end{split}
    \label{eq:omega-AdS-expanded}
\end{align}
Evaluating the Fourier transforms in the $x$ coordinates, this reads
\begin{align}
    \begin{split}
    \frac{1}{\pi} \omega(\phi[F_1]\phi[F_2])
        & = \int_{V_+^*}\frac{\dd{^d k}}{(2\pi)^d} \int \dd{z_1} \dd{z_2}(z_1z_2)^{d/2-(d+1)} \\
        & \qquad \qquad \times J_\nu(z_1\sqrt{|k^2|})J_\nu(z_2\sqrt{|k^2|}) \hat{F}_1(z_1, k) \hat{F}_2(z_2, -k).
    \end{split}
\end{align}
We can then simplify this expression in terms of the Hankel-type transform,
\begin{equation}
\begin{aligned}
    \mathcal{F}(k)&=\int_0^{\infty} \dd{z}  z^{-d/2-1}
    J_\nu(z\sqrt{|k^2|})
    \hat{F}(z, k)\,,
\end{aligned}
\end{equation}
as
\begin{align}
    \begin{split}
    \omega(\phi[F_1]\phi[F_2])
        & = \pi \int_{V_+^*}\frac{\dd{^d k}}{(2\pi)^d} \mathcal{F}_1(k) \mathcal{F}_2(-k).
    \end{split}
    \label{eq:omega-AdS-collapsed}
\end{align}
Note that if $F$ is real, then we have $\hat{F}(z,k)^*=\hat{F}(z,-k)$ and $\mathcal{F}(k)^*=\mathcal{F}(-k)$.
Accordingly, the $\mu$ norm in the bulk, for a real smearing function $F=F(z, x)$, is proportional to the $L^2(V_+^*)$ norm of $\mathcal{F}=\mathcal F(k)$:
\begin{equation}
    \braket{F}_{\mu}
= \pi \int_{V_+^*}\frac{\dd{^dk}}{(2\pi)^d}|\mathcal{F}(k)|^2
\,.
\label{eq:mu-norm-ads-hankel}
\end{equation}

Consider now a VEV shift of the AdS vacuum $\omega$ by a classical profile $v_s(X)$.
The connected two-point function of $\omega$ is unchanged, and remains given by the right-hand side of equation \eqref{eq:ads-2pt-func}.
Note that $v_s(X)$ must be a real solution of the wave equation satisfying the chosen boundary conditions, so it is fully characterized by a momentum-space mode $\tilde{v}_s(k)$ via the formula
\begin{equation}\label{eq:momentum-space-VEV}
    v_s(X) = \int_{V_+^*} \frac{\dd^dk}{2(2\pi)^d}
    z^{d/2}J_{\nu}(z\sqrt{|k^2|})\left[\tilde{v}_s(k)e^{i k_{\mu} x^{\mu}}
    +
    \tilde{v}_s(-k)e^{-i k_{\mu} x^{\mu}}\right]\,,
\end{equation}
with $\tilde{v}_s(-k) = \tilde{v}_s(k)^*$.
In turn, this expression characterizes the VEV shift functional $v[F]$ via
\begin{equation}\label{eq:VEV-bulk-AdS}
\begin{aligned}
    v[F]&=\int \dd{^{d+1}X}\sqrt{|g|}v_s(X)F(X)\\
    &=\int_{V_+^*} \frac{\dd^dk}{2(2\pi)^d}
 \left[\tilde{v}_s(-k) \mathcal{F}(k)+ \tilde{v}_s(k) \mathcal{F}(-k)\right]\\
    &=\text{Re}\int_{V_+^*} \frac{\dd^dk}{(2\pi)^d} \tilde{v}_s(k)\mathcal{F}(-k)\,,
\end{aligned}
\end{equation}
where in the last step we have assumed that $F$ is real.

Our goal is to show that the VEV $v_s$ can be excited out of the vacuum if and only if one has
\begin{equation}
    \frac{1}{\pi} \int_{V_+^*} \frac{\dd^dk}{(2\pi)^d}|\tilde{v}_s(k)|^2< \infty\,.
    \label{eq:excitability-condition-ads}
\end{equation}
As we saw in the preceding sections, VEV excitability is equivalent to the requirement that $v[F]$ defines a bounded functional with respect to the $\mu$ inner product. Since $v$ is complex-linear,
it is enough to check boundedness on real smearing functions with respect to the real inner product $\Re\braket*{\,\cdot\,}{\,\cdot\,}_\mu$.
In what follows, all test functions are accordingly taken to be real unless otherwise stated.

First, assume that equation \eqref{eq:excitability-condition-ads} holds.
Then the last line of equation \eqref{eq:VEV-bulk-AdS} may be viewed as an $L^2(V_+^*)$ overlap, and applying the Cauchy-Schwarz inequality gives
\begin{equation}
\begin{aligned}
    |v[F]|^2 
    &\le  
    \int_{V_+^*} \frac{\dd^dk}{(2\pi)^d}|\tilde{v}_s(k)|^2 
    \int_{V_+^*} \frac{\dd^dp}{(2\pi)^d}|\mathcal{F}(p)|^2 \\
    &= \bigg( \frac{1}{\pi}
    \int_{V_+^*} \frac{\dd^dk}{(2\pi)^d}|\tilde{v}_s(k)|^2 \bigg) \braket{F}_\mu.
\end{aligned}
\end{equation}
Thus \eqref{eq:excitability-condition-ads} is sufficient for $v$ to be $\mu$-bounded, and hence for the VEV shift to be excitable.

Conversely, suppose that the VEV shift \eqref{eq:VEV-bulk-AdS} can be excited out of the AdS vacuum.
Then $v$ is bounded on the space of real smearing functions with respect to $\Re\braket{\,\cdot\,}{\,\cdot\,}_\mu$.
But by equation \eqref{eq:omega-AdS-collapsed}, the $\mu$ inner product on real smearing functions is simply
\begin{equation}
    \langle F_1 | F_2 \rangle_{\mu}
        = \frac{1}{2} \left[ \omega(\phi[F_1] \phi[F_2]) + \omega(\phi[F_2] \phi[F_1]) \right]
        = \pi\, \text{Re} \int_{V_+^*}\frac{\dd{^d k}}{(2\pi)^d} \mathcal{F}_1(k)^* \mathcal{F}_2(k).
\end{equation}
So the real $\mu$-Hilbert space is unitarily equivalent to (a rescaling of) the real Hilbert space version of $L^2(V_+^*).$
Since $v$ is bounded as a functional on this space, the Riesz representation theorem tells us that there must exist some $L^2$ function $\mathcal{V}(k)$ satisfying
\begin{equation}
    v[F]
        = \pi\, \text{Re} \int_{V_+^*} \frac{\dd{^d k}}{(2\pi)^d} \mathcal{V}(k) \mathcal{F}(-k).
\end{equation}
Comparing to equation \eqref{eq:VEV-bulk-AdS} gives $\tilde{v}_s(k) = \pi \mathcal{V}(k),$ so $L^2$ normalizability of $\mathcal{V}(k)$ implies $L^2$ normalizability of $\tilde{v}_s(k).$

\subsection{Boundary dual perspective}

We now demonstrate that the VEV excitability condition \eqref{eq:excitability-condition-ads} for free fields in AdS has a simple boundary interpretation through the holographic dictionary.

In the semiclassical regime, where backreaction is suppressed,
the holographic dictionary implies a duality between free fields $\phi(X)$  in AdS  and generalized free fields $\varphi(x)$ living on the conformal boundary --- see e.g.~\cite{Duetsch:2002hc} for an explanation of this duality in algebraic terms. 
The key formula underlying the duality is the extrapolate dictionary, namely the operator equation
\begin{equation}
    \varphi(x) = N_\nu  \lim_{z\to 0}z^{-\Delta} \phi(z,x)\,,
    \label{eq:extrapolate}
\end{equation}
where we have added a factor of $N_\nu=2^\nu \Gamma(\nu+1)$ as a normalization convention. 
In the case of the Poincaré patch, the boundary is $d$-dimensional Minkowski spacetime,
and the boundary $\ast$-algebra $\A_{\partial, 0}$ is generated by fields $\varphi[f]$ obeying 
\begin{equation}
    [\varphi[f], \varphi[g]]=-i\Omega_\partial[f, g],
\end{equation}
where, following equation \eqref{eq:extrapolate}, $\Omega_\partial$ is defined distributionally by 
\begin{equation}
    \Omega_\partial(x_1, x_2) = N_\nu^{2}\lim_{z_j \to 0}\left[z_1^{-\Delta}z_2^{-\Delta}\Omega(X_1, X_2)\right],
\end{equation}
and where $\Omega$ is the \ql advanced-minus-retarded'' kernel \eqref{eq:ads-Omega} in the Poincaré patch.
Note that $\Omega_\partial$ is not  induced by any Klein-Gordon equation in the boundary Minkowski spacetime, which means that the boundary algebra is that of a generalized free field.

The  vacuum state $\omega$ in the bulk defines a natural zero-mean Gaussian state $\omega_\partial$ on $\A_{\partial, 0}$ via the extrapolate dictionary: 
\begin{equation}
\begin{aligned}
        \omega_\partial(\varphi(x_1)\varphi(x_2))  &=  N^{2}_\nu\lim_{z_j\to 0}\left[z_1^{-\Delta}z_2^{-\Delta}\omega(\phi(X_1)\phi(X_2))\right]\\
        &=\pi\int_{V_+^*}\frac{\dd{^d k}}{(2\pi)^d} |k^2|^{\nu}e^{-ik_\mu(x_1-x_2)^\mu}\,.
\end{aligned}
\end{equation}
By checking the behavior under scalings $x \mapsto \lambda x$, one sees that this is the appropriate two-point function for a conformal primary of dimension $\Delta=d/2 + \nu$.
Smearing with boundary test functions gives 
\begin{equation}
\omega_\partial(\varphi[f_1]\varphi[f_2])= \pi\int_{V_+^*}\frac{\dd{^d k}}{(2\pi)^d} |k^2|^{\nu}\hat f_1(k)\hat f_2(-k),
 \end{equation}
and it follows that the $\mu_\partial$-norm of a real boundary smearing function $f$ is given by 
\begin{equation} \label{eq:bdry-mu}
    \braket{f}_{\mu_\partial} = \pi \int_{V_+^*}\frac{\dd{^d k}}{(2\pi)^d} |k^2|^{\nu}|\hat f(k)|^2\,.
\end{equation}

We now consider the coherent excitation of the bulk AdS vacuum $\omega$ by the VEV shift $v_s(X)$.
By the extrapolate dictionary, this makes the boundary state acquire a one-point function
\begin{equation}
    \omega_\partial(\varphi(x))=N_{\nu}\lim_{z\to0}z^{-\Delta}\omega(\phi(z,x))\,,
\end{equation}
so that the corresponding boundary VEV shift is given by a limit of equation \eqref{eq:momentum-space-VEV} as
\begin{equation}
\begin{aligned}
    v_{s,\partial}(x)&=N_{\nu}\lim_{z\to 0} z^{-\Delta}v_s(z,  x)\\
                    &= \int_{V_+^*} \frac{\dd^dk}{2(2\pi)^d}
    |k^2|^{\nu/2}\left[\tilde{v}_s(k)e^{ik_\mu x^\mu}
    +
    \tilde{v}_s(-k)e^{-ik_\mu x^\mu}\right]\,.
\end{aligned} 
\end{equation}
Smearing leads to a boundary VEV functional that for real $f$ reads 
\begin{equation}
    v_\partial[f]
    =
    \text{Re} \int_{V_+^*} \frac{\dd^dk}{(2\pi)^d}
    \tilde{v}_s(k)|k^2|^{\nu/2}\hat{f}(-k)\,.
    \label{eq:vev-form-cft}
\end{equation}
We now show that the condition \eqref{eq:excitability-condition-ads}, which characterizes bulk VEV excitability, is also necessary and sufficient for excitability of the boundary VEV \eqref{eq:vev-form-cft}. 
In parallel with the bulk case, it suffices to regard  \eqref{eq:vev-form-cft} as a real functional, and to test its boundedness on real smearing functions with respect to the real inner product $\Re\braket{\,\cdot\,}{\,\cdot\,}_{\mu_\partial}$.
Sufficiency then follows directly from the Cauchy-Schwarz inequality in $L^2(V_+^*)$,
\begin{equation}
\begin{aligned}
    |v_\partial[f]|^2
    &\le
    \int_{V_+^*} \frac{\dd^dk}{(2\pi)^d} |\tilde{v}_s(k)|^2\int_{V_+^*} \frac{\dd^dp}{(2\pi)^d}
    |p^2|^{\nu}|\hat{f}(p)|^2\\
    &=
    \left(\frac{1}{\pi} \int_{V_+^*} \frac{\dd^dk}{(2\pi)^d} |\tilde{v}_s(k)|^2\right)
    \braket{f}_{\mu_\partial}
    \,.
\end{aligned}
\end{equation}

Conversely, suppose that the VEV shift \eqref{eq:vev-form-cft}  is bounded on the real Hilbert space obtained by completing real smearing functions with respect to  $\Re\braket{\,\cdot\,}{\,\cdot\,}_{\mu_\partial}$.
By equation \eqref{eq:bdry-mu},  this Hilbert space can be identified with (a rescaling of) the real version of $L^2(V_+^*)$ modified by the weight $|k^2|^{\nu}$ in momentum space.
The Riesz representation theorem then tells us that there exists some momentum-space function $\hat{v}_{\del}(k)$ satisfying
\begin{equation} \label{eq:v-del-Riesz}
    v_{\del}[f] = \pi\, \text{Re} \int_{V_+^*} \frac{\dd^d k}{(2 \pi)^d} |k^2|^{\nu} \hat{v}_{\del}(k) \hat{f}(-k)
\end{equation}
and
\begin{equation} \label{eq:v-del-normalized}
    \pi \int_{V_+^*} \frac{\dd^d k}{(2 \pi)^d} |k^2|^{\nu} |\hat{v}_{\del}(k)|^2 < \infty.
\end{equation}
Comparing equation \eqref{eq:v-del-Riesz} to equation \eqref{eq:vev-form-cft} gives $\tilde{v}_s(k) = \pi |k^2|^{\nu/2} \hat{v}_{\partial}(k),$ and inequality \eqref{eq:v-del-normalized} gives $L^2$ normalizability of $\tilde{v}_s(k).$

To summarize, we have shown that the condition \eqref{eq:excitability-condition-ads} for exciting a VEV from the Poincar\'e AdS vacuum can be recast as a VEV excitability condition in the dual generalized free theory.
In particular, this proves that whether a bulk VEV is excitable depends only on its asymptotic behavior near the AdS boundary, as witnessed by the extrapolate dictionary \eqref{eq:extrapolate}.
This is compatible with the general principle that failures of excitability in physical states of a free field theory always result from infrared effects, rather than from details of the state at finite distance.\footnote{For instance, see \cite{Verch:hadamard}, where it was shown that excitability is always possible within compact subregions for zero-mean Gaussian states obeying the Hadamard property. } 

\section*{Acknowledgments}
JC acknowledges the support of the Natural Sciences and Engineering Research Council of Canada through Vanier Canada Graduate Scholarships [Funding Reference Number: CGV–192707]. 

 \appendix
\section{A matrix element computation}
\label{app:abelian-excitability-combinatorics}

In this appendix, we derive the identity
\begin{equation}
\label{eq:weyl-in-exp-g-app}
    \langle e^{\phi[g]}\omega|e^{i \phi[f]}|e^{\phi[g]}\omega\rangle
    =
e^{-\frac{1}{2}\omega(\phi[f-2i g]^2)}
\end{equation}
for real smearing functions $f$ and $g$.
This appeared as equation \eqref{eq:weyl-in-exp-g} in the main text.

First note that the power series defining $e^{\phi[g]} |\omega\rangle$ converges in $\mathcal{H}_{\omega}$.\footnote{It suffices to see that $\sum_{n=0}^\infty \norm{\phi[g]^n \ket{\omega}/n!}$ converges, which can be checked using the Gaussianity of $\omega$.}
Using this power series,
together with the commutator
\begin{equation}
    [a[f], \phi[g]^n]= n\,\omega(\phi[f]\phi[g]) \,\phi[g]^{n-1}\,,
\end{equation}
one can also compute the diagonal action of $a[f]$ on $e^{\phi[g]} |\omega\rangle$:
\begin{equation}
a[f] e^{\phi[g]}|\omega\rangle = \omega(\phi[f]\phi[g]) e^{\phi[g]} |\omega\rangle\,.
\label{eq:annihilate-exp-g}
\end{equation}

We now expand the left-hand side of
equation \eqref{eq:weyl-in-exp-g-app} as a power series in $\phi[f]$.
So long as the power series converges absolutely, it gives a rigorous answer for the expectation value.
Using the normal-ordering identity
\begin{equation}
    \phi[f]^n
    =
    \sum_{k=0}^{ \left \lfloor n/2 \right \rfloor} \frac{n!}{k!(n-2k)!}\left(\frac{\braket{f}_\mu}{2}\right)^k :\phi[f]^{n-2k}:
\end{equation}
we have\footnote{
To obtain the right-hand side of equation \eqref{eq:normal-order-exponential}, we exchanged the order of summation over $n$ and $k$, and replaced $n$ by a new summation index $m=n-2k$.
The order of summation can be switched because of the absolute convergence in equation \eqref{eq:processing-matrix-elt} below.
}
\begin{equation}
\label{eq:normal-order-exponential}
    e^{i \phi[f]}
    =
    \sum_{n=0}^\infty \frac{i^n}{n!}\phi[f]^n
    =
e^{-\frac{1}{2}\braket{f}_\mu} \sum_{m=0}^\infty \frac{i^m}{m!} :\phi[f]^m:\,,
\end{equation}
and so 
\begin{equation}
  \langle e^{\phi[g]}\omega|e^{i \phi[f]}|e^{\phi[g]}\omega\rangle
  =
  e^{-\frac{1}{2}\braket{f}_\mu}
  \sum_{m=0}^\infty \frac{i^m}{m!}\langle e^{\phi[g]}\omega| :\phi[f]^m:|e^{\phi[g]}\omega\rangle\,.
\end{equation}
Expanding $:\phi[f]^m:$ via the formula
\begin{equation}
:\phi[f]^m: = \sum_{k=0}^m \binom{m}{k}(a[f]^\dag)^k a[f]^{m-k}\,,
\end{equation}
we can act with the $a[f]^\dag$ on the bra and the $a[f]$ on the ket as in equation \eqref{eq:annihilate-exp-g}.
We obtain 
\begin{equation}
\begin{aligned}
    \langle e^{\phi[g]}\omega|e^{i \phi[f]}|e^{\phi[g]}\omega\rangle
    &=
    e^{-\frac{1}{2}\braket{f}_\mu} \sum_{m=0}^\infty \frac{i^m}{m!} \sum_{k=0}^{m}\binom{m}{k}\langle e^{\phi[g]} \omega |
    \omega( \phi[g]\phi[f])^{k}
    \omega( \phi[f]\phi[g])^{m-k} | e^{\phi[g]} \omega\rangle\\
    &=
    e^{-\frac{1}{2}\braket{f}_\mu} e^{2\braket{g}_\mu} 
    \sum_{m=0}^\infty \frac{i^m}{m!} \sum_{k=0}^{m} \binom{m}{k} 
    \omega( \phi[g]\phi[f])^{k}
    \omega( \phi[f]\phi[g])^{m-k}\\
    &=
    e^{-\frac{1}{2}\braket{f}_\mu} e^{2\braket{g}_\mu} 
\sum_{k=0}^{\infty} \sum_{j=0}^{\infty} \frac{i^{j+k}}{j!k!} 
\omega( \phi[f]\phi[g])^{j}
\omega( \phi[g]\phi[f])^{k}\,,
\end{aligned}
\label{eq:processing-matrix-elt}
\end{equation}
where in the first line, we used that $\omega( \phi[g]\phi[f])$ is the complex conjugate of $\omega( \phi[f]\phi[g])$; in the second line we used that the norm-squared of $e^{\phi[g]} |\omega\rangle$ is $e^{2\braket{g}_\mu}$, and in the third line, we exchanged the order of summation so that $k$ ranges from $0$ to $\infty$ while $m$ ranges from $k$ to $\infty$, 
and introduced $j = m - k$.

The last sum in equation \eqref{eq:processing-matrix-elt} converges absolutely, giving 
\begin{equation}
    \langle e^{\phi[g]}\omega|e^{i \phi[f]}|e^{\phi[g]}\omega\rangle
    =
    e^{-\frac{1}{2}\braket{f}_\mu} e^{2\braket{g}_\mu} 
    e^{i \omega(\phi[f]\phi[g])} 
    e^{i \omega(\phi[g]\phi[f])} 
    \,,
\end{equation}
which is the right-hand side of equation \eqref{eq:weyl-in-exp-g-app}.

\bibliographystyle{JHEP}
\bibliography{bibliography}

\end{document}